\newcommand{\comment}[1]{}
\providecommand{\keywords}[1]{\textbf{\textit{Keywords:  }}#1}
\begin{document}

\title{Securely Solving the Distributed Graph Coloring Problem\thanks{The short version of this paper appeared in the proceedings of 2011 IEEE International Conference on Privacy, Security, Risk and Trust \cite{HongTabu}.}}

\author{Yuan Hong\textsuperscript{$\dagger$} \and Jaideep Vaidya\textsuperscript{$\ddagger$} \and Haibing Lu\textsuperscript{$\S$}}
\institute{\textsuperscript{$\dagger$}Illinois Institute of Technology\\ \textsuperscript{$\ddagger$}Rutgers University\\ \textsuperscript{$\S$}Santa Clara University \\
	\email{\textsuperscript{$\dagger$}yuan.hong@iit.edu\\
		\textsuperscript{$\ddagger$}jsvaidya@business.rutgers.edu\\
		\textsuperscript{$\S$}hlu@scu.edu}
}

\maketitle
\begin{abstract}
Combinatorial optimization is a fundamental problem found in many fields. In many real life situations, the constraints and the objective function forming the optimization problem are naturally distributed amongst different sites in some fashion. A typical approach for solving such problem is to collect all of this information together and centrally solve the problem. However, this requires all parties to completely share their information, which may lead to serious privacy issues. Thus, it is desirable to propose a privacy preserving technique that can securely solve specific combinatorial optimization problems. A further complicating factor is that combinatorial optimization problems are typically NP-hard, requiring approximation algorithms or heuristics to provide a practical solution. In this paper, we focus on a very well-known hard problem -- the distributed graph coloring problem, which has been utilized to model many real world problems in scheduling and resource allocation. We propose efficient protocols to securely solve such fundamental problem. We analyze the security of our approach and experimentally demonstrate the effectiveness of our approach.

\end{abstract}

\keywords{Privacy, Secure Computation, Combinatorial Optimization, Meta-Heuristic}

\section{Introduction}
\label{sec:intro}
Optimization is a fundamental problem found in many fields, and it has many applications in operational research, artificial intelligence, machine learning and mathematics. In many real life situations, the constraints and the objective function forming the optimization problem are naturally distributed amongst different sites in some fashion. A typical approach for solving such problem is to collect all of this information together and centrally solve the problem. For example in supply chain management, since the delivery trucks are not always fully loaded (i.e. empty $25\%$ of the time), a collaborative logistics service (Nistevo.com) can be utilized to merge loads from different companies bound to the same destination. Huge savings were realized in such collaborative transportation (freight costs were cut by 15\%, for an annual savings of \$2 million\cite{Turban-LoL}), which can be modeled as a typical distributed linear programming problem.

However, this requires all parties to completely share their information, which may lead to serious privacy issues simply because each party's portion of the optimization problem typically refers to its private information. For instance, in constrained optimization, the private constraints often describe the limitations or preferences of a party, and each party's share of the global solution represents its
private output in the global best decision or assignments. Furthermore, even without privacy constraints, solving an optimization problem typically is computationally very expensive. 

Thus, an open question is whether we can solve such problems both securely and efficiently? This challenging issue has been investigated only in the context of a few optimization problems. Specifically, Vaidya \cite{Vaidya09SAC} securely solved the collaborative linear programming problem, assuming that one party holds its private objective function while the other party holds its private constraints. Sakuma et al. \cite{SakumaK07} proposed a genetic algorithm for securely solving two-party distributed traveler salesman problem (TSP) \cite{tsp18}, assuming that one party holds the cost vector while the other party holds the exact cities that should be visited. Atallah et al. \cite{sec_supplychain} proposed protocols for secure supply chain management. However, none of these can be used to solve general problems such as distributed scheduling or network resource allocation \cite{Marx04graphcoloring}. In this paper, we focus on this class of problems. For instance, consider a simple multi-party job scheduling problem:

\begin{example}
\textit{Alice, Bob and Carol have jobs $\{1,2,3\}$,$\{4,5\}$ and $\{6,7\}$ respectively; as depicted in Fig. \ref{fig:vertex} (edges represent conflicts), some jobs cannot be assigned into the same time slot due to resource conflicts (i.e. sharing the same machine); Since each job is held by a specific party,   they do not want to reveal their set of jobs and corresponding assigned time slots to each other. Given this, is it possible to securely assign the $7$ jobs into $k$ time slots?}
\end{example}

\begin{figure}[h]
\centering \subfigure[Partitioned
Graph]{\centering\includegraphics[angle=0,
width=0.27\linewidth]{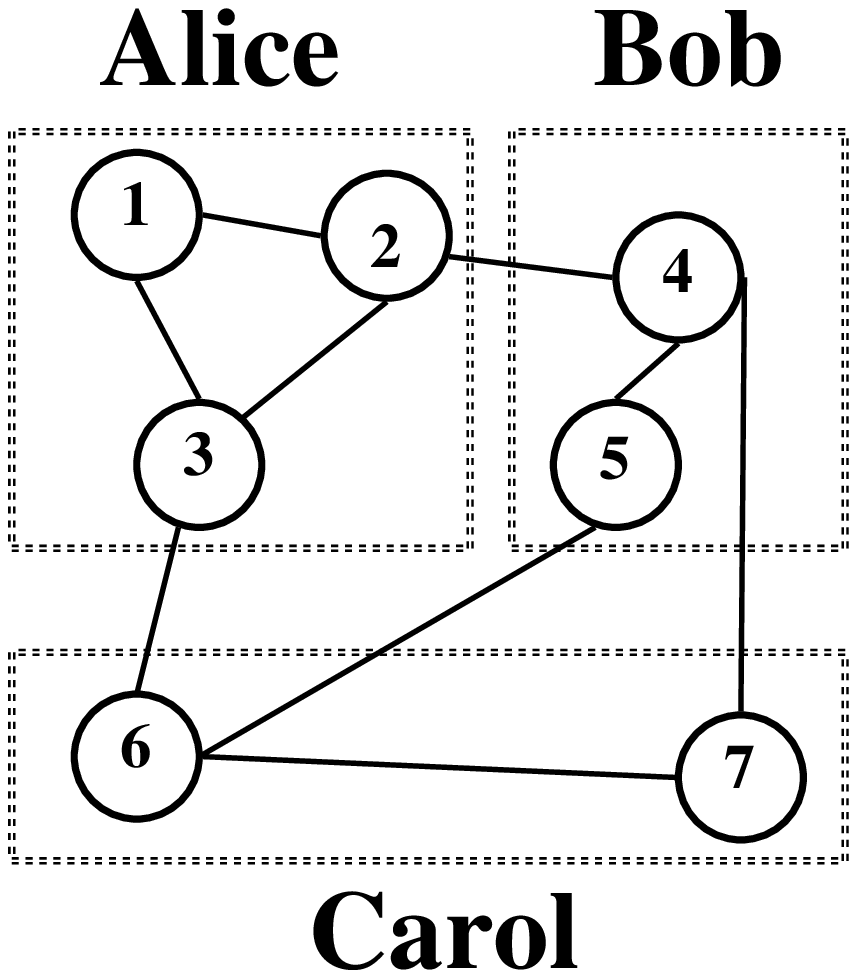} \label{fig:vertex}
}\hspace{0.3in}\subfigure[Partitioned Adjacency
Matrix]{\centering\includegraphics[angle=0,
width=0.44\linewidth]{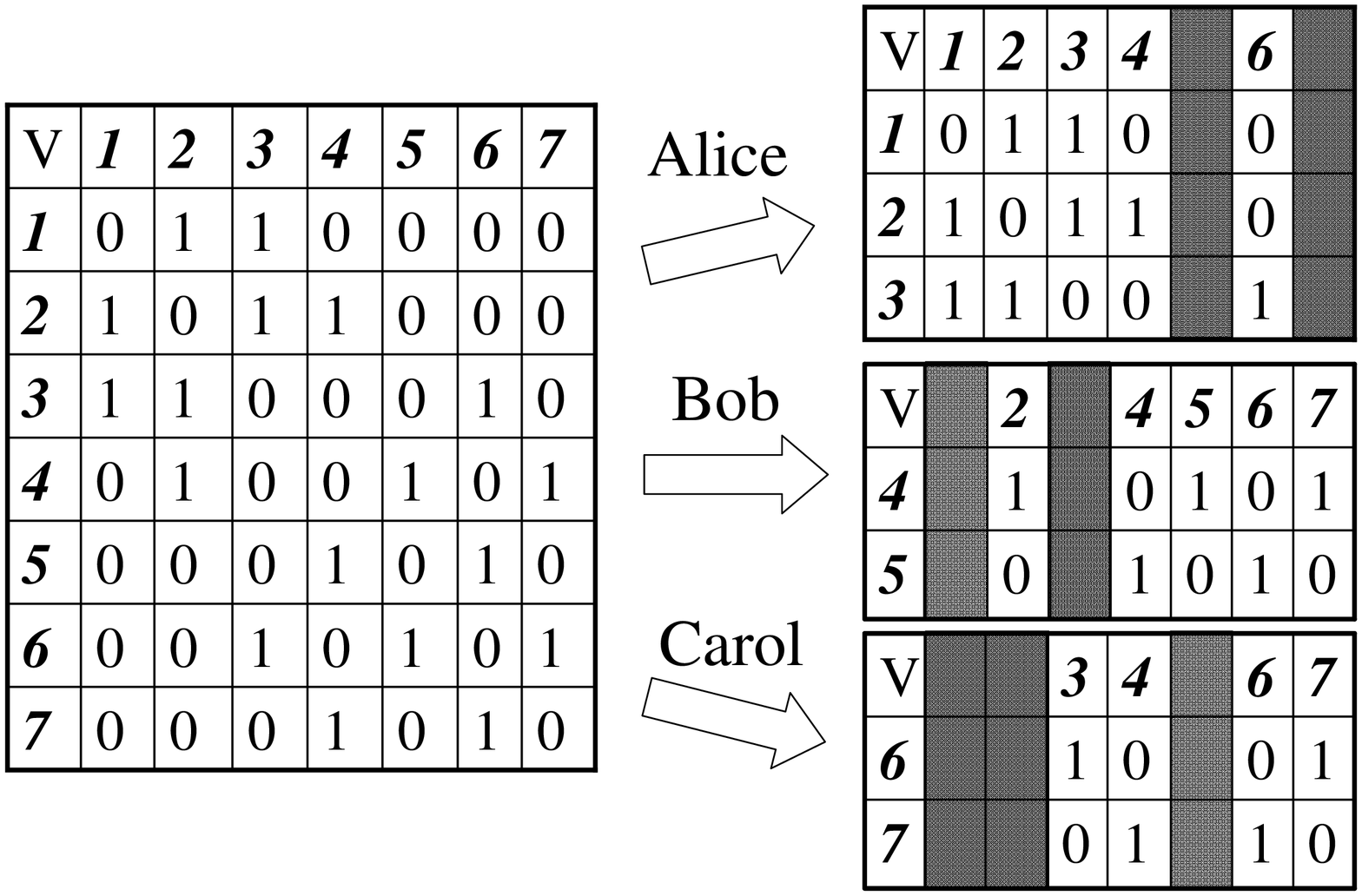} \label{fig:adjv} } \caption{
Distributed Graph Coloring Problem} \label{fig:part}
\end{figure}

Intuitively, the above scheduling problem can be modeled as a distributed (among $3$ parties) k-coloring problem (the decision problem of graph coloring). Specifically, given $k$ different colors ($k$ time slots), the undirected graph $G=(V,E)$ represents the jobs and their conflicts (as shown in Fig. \ref{fig:vertex}) -- a vertex represents a job, and an edge between two vertices represents a conflict between the corresponding jobs. Now, each vertex should be assigned one out of $k$ colors (representing the time assignment for a job), with no two vertices connected by an edge having the same color (no two jobs having a conflict should be scheduled into the same time slot). Thus, a feasible k-coloring solution represents a feasible scheduling solution.

Note that $G$ is partitioned such that each party owns a subgraph (i.e, holds some vertices and the edges related to its own vertices). As shown in Fig. \ref{fig:vertex}, Alice, Bob and Carol hold $3$, $2$ and $2$ jobs, respectively. The adjacency matrix of the graph is thus partitioned as shown in Fig. \ref{fig:adjv}. \comment{In Example $2$, $G$ is partitioned in terms of edges among $m$ parties (we denote it as \textit{edge-partitioned} DisGC later): each party holds a subset of edges in $G$ that indicates the conflicts among a subset of vertices (All the vertices are known to all parties, but the edges and the corresponding vertices are not revealed). As shown in Fig. \ref{fig:edge} and \ref{fig:adje}, Alice, Bob and Carol holds $4, 3$ and $3$ edges among all $7$ vertices, respectively. However, the adjacency matrix of the graph is arbitrarily partitioned (also symmetric) among three parties.} When formulating and solving this problem, Alice, Bob and Carol do not want to reveal their partition in the graph and the colors of their vertices to each other. Hence, the primary goal of this paper is to propose a privacy-preserving solver for the distributed graph coloring problems.

Furthermore, in many combinatorial optimization problems, exhaustive search is not feasible due to the NP-hard nature of them. Since graph coloring problem has been proven to be NP-hard (its decision problem, $k$-coloring problem is NP-complete), securely solving the graph coloring problem should be extremely challenging due to the computational complexity. Several meta-heuristics such as tabu search have been proposed to solve the centralized graph coloring problem by improving the performance of local search with memory structures \cite{tabu1}\cite{tabu2}\cite{Hertz}. Even if meta-heuristics cannot guarantee an optimal results, they are generally considered to be quite effective in solving combinatorial optimization problems and obtain near-optimal outputs in an acceptable running time. Therefore, in this paper, we propose a secure and efficient algorithm to enable distributed tabu search. 

The rest of this paper is organized as follows. Section \ref{sec:pro} formally defines the problem and reviews the tabu search method for graph coloring. Section \ref{sec:vertex} introduces a secure and efficient protocol to solve distributed graph coloring problem, assuming that all parties are honest-but-curious (obey the protocol). Section \ref{sec:security} provides a formal security analysis and complexity analysis. Section \ref{sec:exp} experimentally validates the performance of our approach. Section \ref{sec:con} concludes the paper and discusses some directions for future work.

\section{Problem Definition}
\label{sec:pro}
In graph theory, the \textbf{k-coloring} problem represents the following NP-complete decision problem\cite{bookco} -- does graph $G$ have a proper vertex coloring with $k$ colors.
The (NP-hard) optimization version of this problem simply seeks the minimum $k$ that can properly color the given graph. The minimum $k$ is known as the chromatic number of the graph. Since, seeking the chromatic number can be reduced to a set of $k$-coloring problems, we only consider the $k$-coloring problem on distributed graphs in this paper. We first formally define the graph coloring (decision problem), review tabu search for graph coloring, and then introduce our solution approach.

\subsection{Graph Coloring Problem Formulation with Scalar Products}
\label{sec:gc_pro}

When seeking the chromatic number $k$ or solving the $k$-coloring problem, any solution requires that the number of adjacent vertices having the same color should be 
$0$. We thus denote any two adjacent vertices that have the same color as a ``conflict'' in
a colored graph. Thus, given $k$ different colors, if we let $\mu$ represent the total number of conflicts in a colored $G$, we thus have: if $\min(\mu)=0$, $G$ is $k$-colorable.

Given $k$ colors, we can represent the color of any vertex in the graph using a $k$-dimensional boolean vector $x$ (domain $\Phi(k)$: all the $k$-dimensional boolean vectors include only one ``1'' and $k-1$ ``0''s). If the $ith$ number in the boolean vector is $1$ (all the remaining numbers should be $0$), the current vertex is colored with the $ith$ color ($i\in[1,k]$). Thus, we let $x=\{x_i,\forall
v_i\in G\}$ be a coloring solution in a $k$-coloring problem where $x_i$ represents the coloring boolean vector for vertex $v_i$. In addition, given the coloring boolean vectors of two adjacent vertices $x_i,x_j\in x$ ($i\ne j$ and $v_i,v_j\in G$), we can represent the color conflict of $v_i$ and $v_j$ in solution $x$ as the \textbf{scalar product} of $x_i$ and $x_j$:

\begin{equation}
\mu_{ij}(x) = \left\{
\begin{array}{rl}
0 & \text{   if  } x_i\cdot x_j = 0 \text{ ($v_i$ and $v_j$ are not conflicted)}\\
1 & \text{   if  } x_i\cdot x_j = 1 \text{ ($v_i$ and $v_j$ are
conflicted)}
\end{array} \right.
\end{equation}

where $\mu(x)=\sum_{\forall e_{ij}\in G}\mu_{ij}(x)$. We thus formulate the $k$-coloring problem as $\{\min: \sum_{\forall e_{ij}\in G} x_i\cdot x_j$, $s.t. \forall v_i,v_j\in G, x_i,x_j\in \Phi(k)\}$. If and only if the optimal value is $0$, the problem is $k$-colorable. When seeking the chromatic number $k$, we can solve the above problem using different $k$ and choose the minimum colorable $k$.

\subsection{Tabu Search for Solving Graph Coloring Problems}
\label{sec:tabucol}

Tabu search uses a memory structure to enhance the performance of a local search method. Once a potential solution has been determined, it is marked as ``taboo''.  So the algorithm does not visit that
possibility repeatedly. Tabu search has been successfully used for graph coloring. Specifically, given a graph $G=(V,E)$ and $k$ colors, a popular tabu search algorithm for graph coloring -- Tabucol \cite{Hertz} (summarized using our notation) is described as follows:

\begin{enumerate}

  \item initialize a solution $x=(\forall v_i\in G, x_i\in \phi(k))$ by randomly selecting colors.
  \item iteratively generate neighbors $x'$ of $x$ until $\mu(x')<\mu(x)$ by changing the color of an endpoint of an arbitrary conflicted edge $v_i:x_i\rightarrow x_i'$
  (if no $x'$ is found with $\mu(x')<\mu(x)$ in a specified number of iterations, find a best $x'$ in current set of neighborhoods where $(v_i,x_i')$ is not in the tabu list, and let $x'$ be the next solution).
  \item if ($v_i,x_i'$) is not in the tabu list, make current $x'$ (best so far) as the next solution $x=x'$ and add the pair $(v_i,x_i)$ into the tabu list; If $(v_i,x_i')$ is in the tabu list and $\mu(x')$ is still smaller than $\mu(x)$, make $x'$ as the next solution anyway; Otherwise, discard $x'$.

  \item repeat step 2,3 until $\mu(x')=0$ or the maximum number of iterations is reached.
\end{enumerate}

\subsection{Securely Solving Distributed Graph Coloring Problems}

Recall that Example $1$ can be modeled by a distributed graph coloring problem on a graph $G$ whose adjacency matrix is partitioned among $m$ parties (each vertex belongs to only one
party). We thus define it as follows:

\begin{definition}[Distributed Graph Coloring (DisGC)]
Given $k$ colors and a partitioned graph $G=(V,E)$ where each vertex $v_i\in V$ belongs to only one party, does a $k$-coloring of $G$ exist?

\end{definition}

In the distributed graph, we define two different types of edges: 1. \textbf{internal edge: } any edge $e_{ij}$ whose endpoints ($v_i$ and $v_j$) are owned by the same party; 2. \textbf{external edge: } any
edge $e_{ij}$ whose endpoints ($v_i$ and $v_j$) belong to two different parties. Similarly, we define two categories of vertices: 1. \textbf{inner vertex}: a vertex that does not have any external edge
(e.g.,. $v_1$ in Fig. \ref{fig:vertex}) 2. \textbf{border vertex}: a vertex that has at least one external edge (e.g., $v_2$ in Fig. \ref{fig:vertex}).

When $m$ different parties want to collaboratively solve the DisGC problem, each party does not want to reveal its partition in $G$ and the colors of its vertices to others. For instance, suppose we color the
graph in Fig. \ref{fig:vertex} using $k=3$ colors, the parties should only learn the relevant information. For example, Fig. \ref{fig:colored} gives a $3$-colored global solution, corresponding to which Alice, Bob and Carol would only know the information shown in Fig. \ref{fig:gca}, \ref{fig:gcb} and \ref{fig:gcc}, respectively. Thus, each party only sees the colors assigned to the vertices in their local graph and cannot even see the color of the border vertices of other parties that share an external edge with this party (the vertex existence is known anyway). For example, in Alice's view of Bob's graph, only vertex 4 which is directly connected to vertex 2 in Alice's graph is visible to Alice, though its color is not known to Alice.

\begin{figure}[h]
\centering \subfigure[3-Colored]{\centering\includegraphics[angle=0,
width=0.19\linewidth]{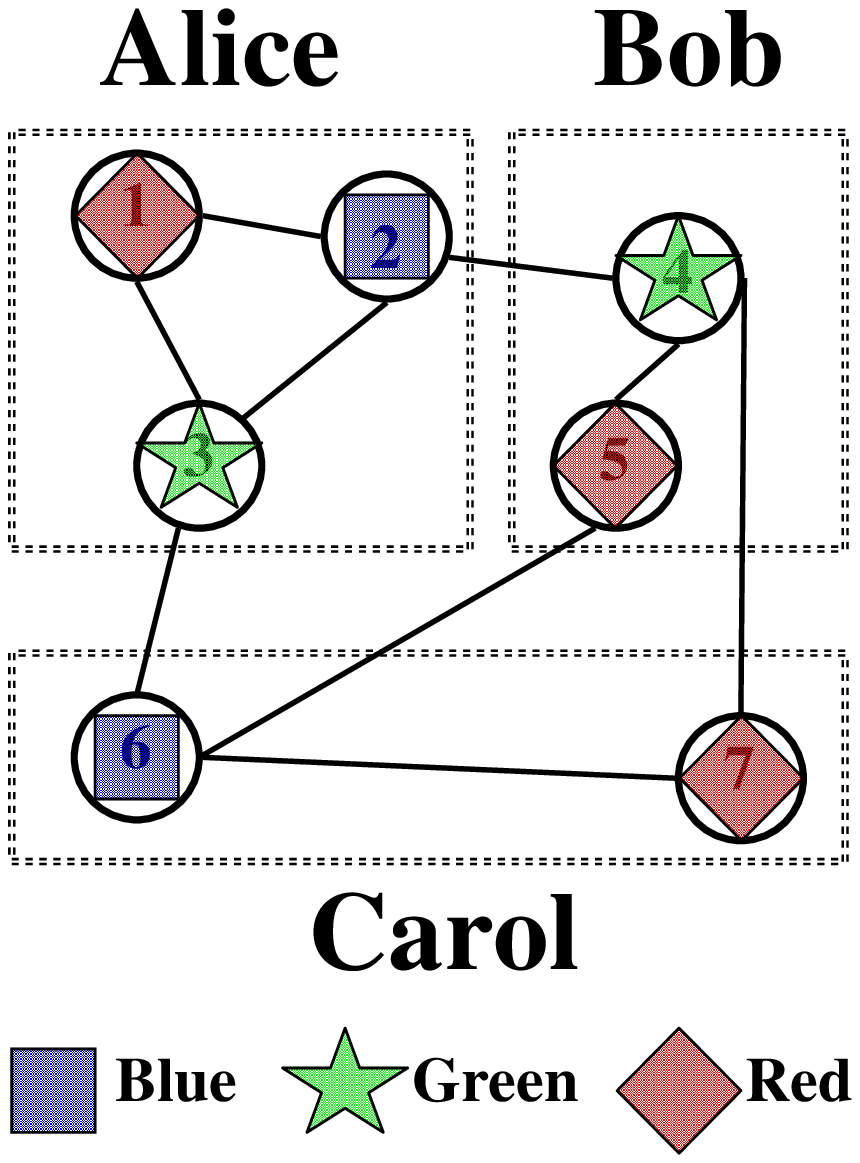} \label{fig:colored} } \hspace{0.05in}
\subfigure[Alice]{\centering\includegraphics[angle=0,
width=0.19\linewidth]{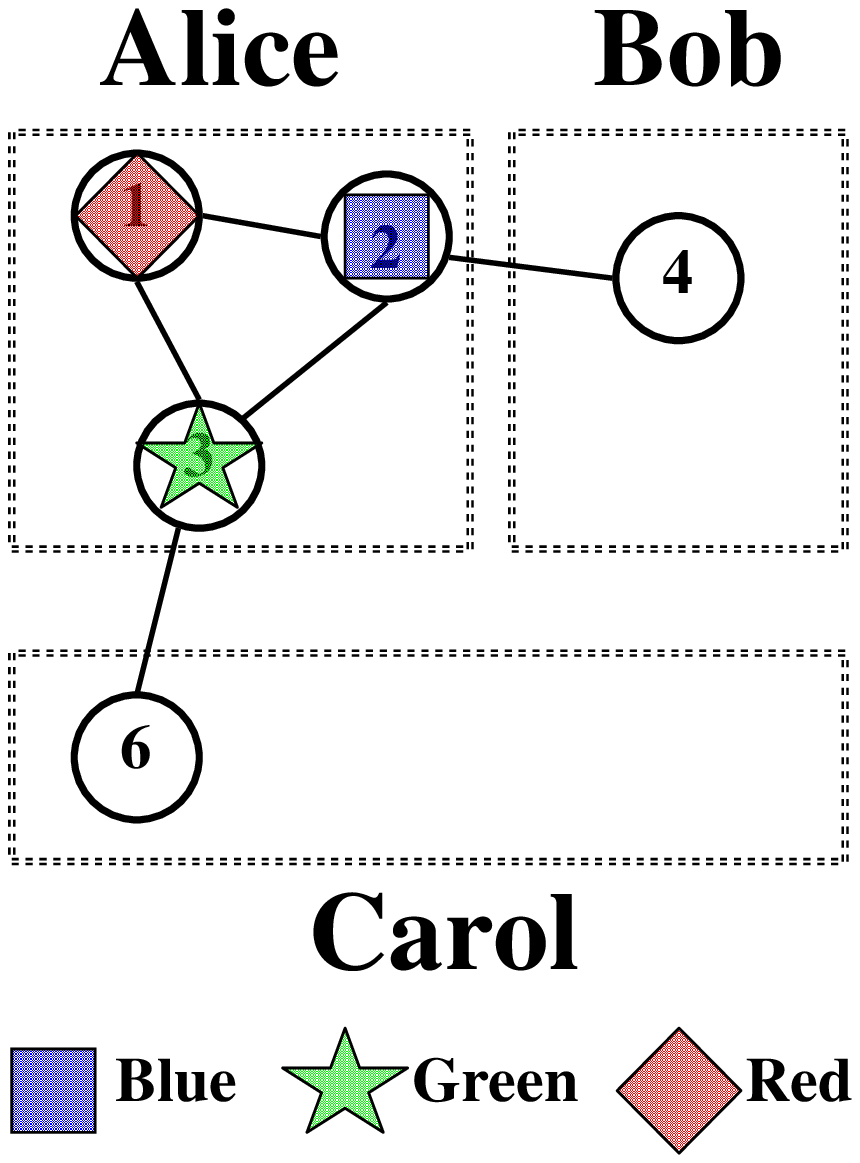} \label{fig:gca} }
\hspace{0.05in}\subfigure[Bob]{\centering\includegraphics[angle=0,
width=0.19\linewidth]{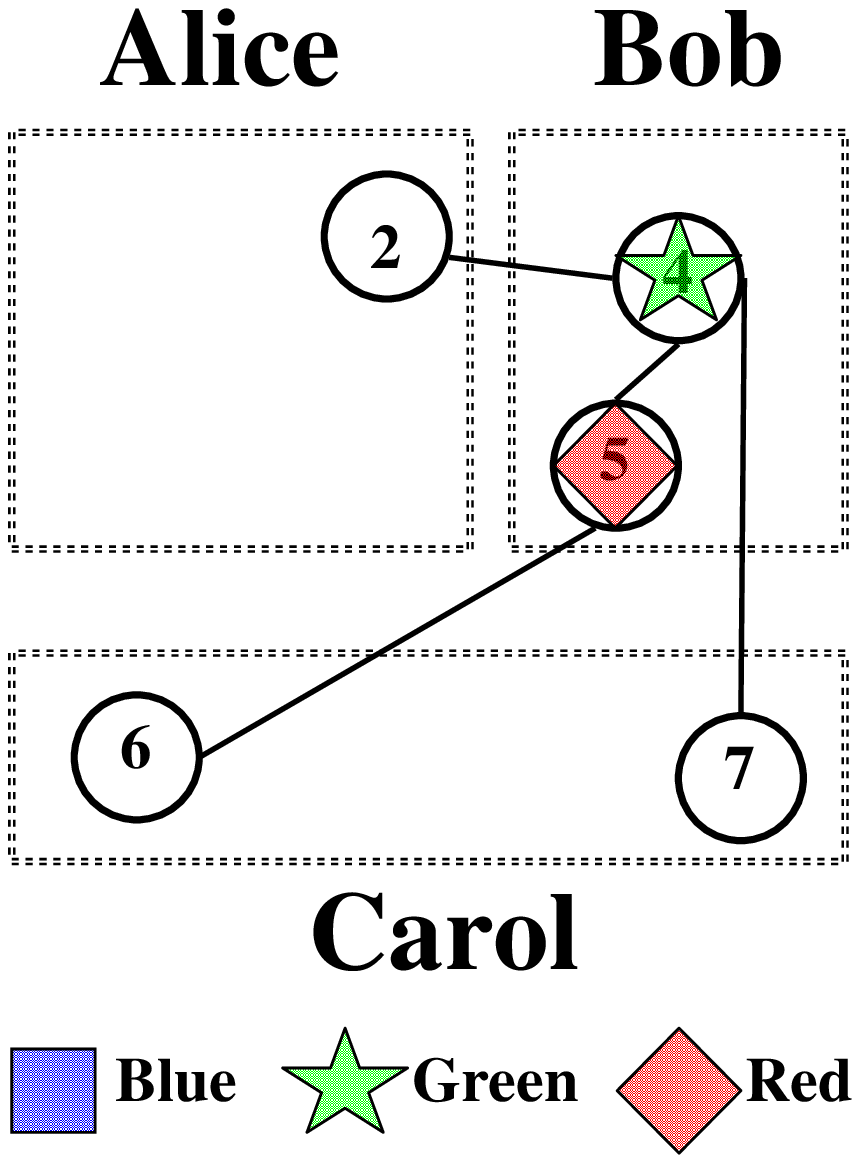} \label{fig:gcb} }
\hspace{0.05in}\subfigure[Carol]{\centering\includegraphics[angle=0,
width=0.19\linewidth]{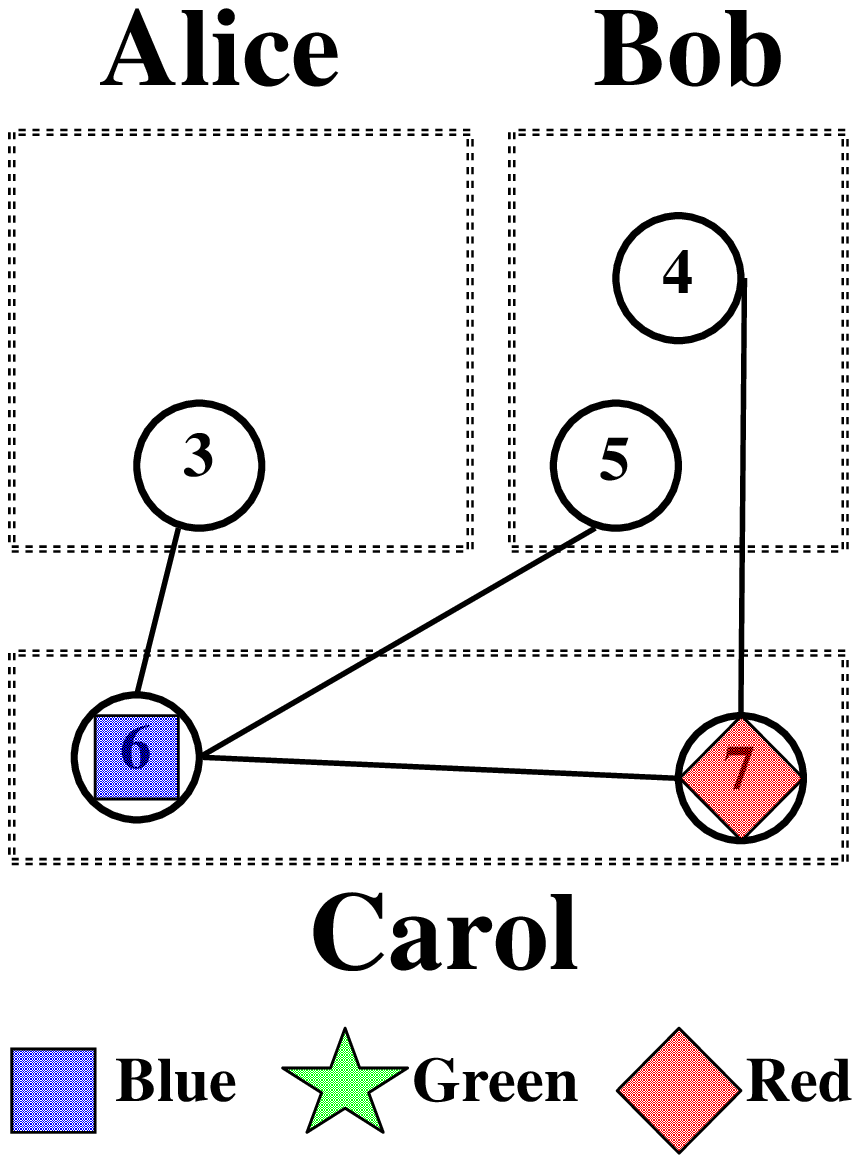} \label{fig:gcc} }
\vspace{-0.1in}\caption[Example of Distributed Combinatorial
Optimization] {A Secure Solution to the DisGC Problems}
\label{fig:gc}
\end{figure}

Thus, we have two privacy requirements when solving the DisGC problem: first, protect each party's subset of the graph structure including its vertices and the internal edges (the external edges and their endpoints are inevitably known to the two parties sharing them); second, protect the color assignment for each party's  vertices (including the endpoints of the external edges) in every iteration. A direct method to securely solve the DisGC problem is by using a trusted third party to collect partitioned data from all parties, centrally solving the problem with Tabucol algorithm \cite{Hertz} and distributing the color of every vertex to the vertex owner. However, if no trusted third party can be found, a secure protocol is necessary. Therefore, the goal of this paper is to present a privacy-preserving tabu search method for solving DisGC without a trusted third party, assuming that all parties are honest-but-curious. The honest-but-curious assumption is realistic, and is commonly assumed in the literature, and the solution can be generalized to fully malicious adversaries (though at added cost).

\section{Privacy Preserving Tabu Search}
\label{sec:vertex}
As mentioned in Section \ref{sec:tabucol}, a typical tabu search algorithm for graph coloring problem includes numerous repeated computations such as: computing the total number of conflicts for a solution, searching better neighborhoods for a specific solution, verifying a new vertex color pair with all the taboos and updating the tabu list if necessary. We now look at how all of this can be securely done, starting with some of the fundamental cryptographic building blocks that will be used to provide a solution.

\subsection{Fundamental Cryptographic Building Blocks}

\subsubsection{Secure Scalar Product for Secure Conflict Computation}

For any potential solution (a colored graph) to the DisGC problem, the total number of conflicts corresponding to that solution must be computed to figure out whether it improves the objective. As introduced in Section \ref{sec:gc_pro}, the total number of conflicts in a solution $x$ is denoted by $\mu(x)$. $\mu(x)$ can be represented as the sum of all the scalar products of every two adjacent vertices' coloring vectors. To securely compute the scalar product of two endpoints on every external edge, we use Goethals et al.'s approach \cite{Goethals-scalprod} which is based on a public-key homomorphic cryptosystem. This is both simple and provably secure. The problem is defined as follows: $P_a$ has a $n$-dimensional vector $\vec{X}$ while $P_b$ has a $n$-dimensional vector $\vec{Y}$. At the end of the protocol, $P_a$ should get vector vector $\vec{X} \cdot \vec{Y} +r_b$ where $r_b$ is a random number chosen from a uniform distribution and is known only to $P_b$. The key idea behind the protocol is to use a homomorphic encryption system such as the Paillier cryptosystem \cite{Paillier99}. A homomorphic cryptosystem is a semantically-secure public-key encryption system which has the additional property that given any two encryptions $E(A)$ and $E(B)$, there exists an encryption $E(A*B)$ such that $E(A)*E(B) = E(A*B)$, where $*$ is either addition or multiplication (in some Abelian group). The cryptosystems mentioned above are additively homomorphic (thus the operation $*$ denotes addition). Using such a system, it is quite simple to create a scalar product protocol. If $P_a$ encrypts its vector $\vec{X}$ and sends $Enc_{pk}(\vec{X})$ with the public key $pk$ to $P_b$, $P_b$ can use the additive homomorphic property to compute the encrypted sum of scalar product and its random number: $Enc_{pk}(\vec{X}\cdot\vec{Y}+r_b)$. Finally, $P_b$ sends $Enc_{pk}(\vec{X}\cdot\vec{Y}+r_b)$ back to $P_a$. Thus, $P_a$ decrypts and obtain $\vec{X}\cdot\vec{Y}+r_b$ ($r_b$ and $\vec{X}\cdot\vec{Y}$ are unknown to $P_a$). Since $P_b$ cannot decrypt $Enc_{pk}(\vec{X}\cdot\vec{Y}+r_b)$ with a public key, it cannot learn the scalar product either.

Therefore, we can adopt the above cryptographic building block to securely compute the total number of conflicting edges $\mu(x)$ in solution $x$ (Secure Conflict Computation). Note: while computing
$\mu(x)$, it is not necessary to securely compute the number of each party's conflicting internal edges in $x$. Alternatively, we can let each party independently count them, and sum the number with its shares in secure scalar product computation. The detailed steps are shown in Algorithm \ref{algo:svc}.

\begin{algorithm}[!h]
\begin{algorithmic}[1]

\renewcommand{\algorithmicrequire}{\textbf{Input:}}
\renewcommand{\algorithmicensure}{\textbf{Output:}}

\REQUIRE $G=(V,E)$ is colored as $x\in \Phi(k)$ where the color of
$\forall v_i\in G$ is denoted as $x_i\in x$.

\ENSURE each party $P_a$'s share $\mu_a(x)$ in $\mu(x)$ where
$\mu(x)=\sum_{\forall a}\mu_a(x)$;

\FOR{each party $P_a$}

\STATE count the total number of its conflicting internal edges as
$\mu_a(x)$;

\ENDFOR

\FORALL{external edges $e_{ij}\in G$}

\STATE $****$ Suppose that $v_i$ and $v_j$ belong to $P_a$ and
$P_b$, respectively. $****$

\STATE W.o.l.g. $P_a$ creates a public and private key pair $(pk,
sk)$;

\STATE $P_a$ encrypts its color vector $x_i$ as $x_i'=Enc_{pk}(x_i)$
and sends $x_i'$ and $pk$ to $P_b$;

\STATE $P_b$ generates a random integer $r_{ij}$, encrypts the
scalar product as: $x_i''=Enc_{pk}(x_i\cdot x_j)*Enc_{pk}(r_{ij})$
with the public key $pk$, and sends $x_i''$ back to $P_a$;

\STATE $P_a$ decrypts $x_i''$ with its private key $sk$ and obtain
$s_{ij}=x_i\cdot x_j+r_{ij}$;

\ENDFOR

\FOR{each party $P_a$}

\STATE $\mu_a(x)\leftarrow \mu_a(x)+\sum \forall s_{ij}$ (received
from other parties)$-\sum\forall r_{ij}$ (generated by $P_a$);

\ENDFOR

\end{algorithmic}
\caption{Secure Conflict Computation}\label{algo:svc}
\end{algorithm}

\begin{figure}[h]
\centering \subfigure[Conflicts in
$G$]{\centering\includegraphics[angle=0,
width=0.28\linewidth]{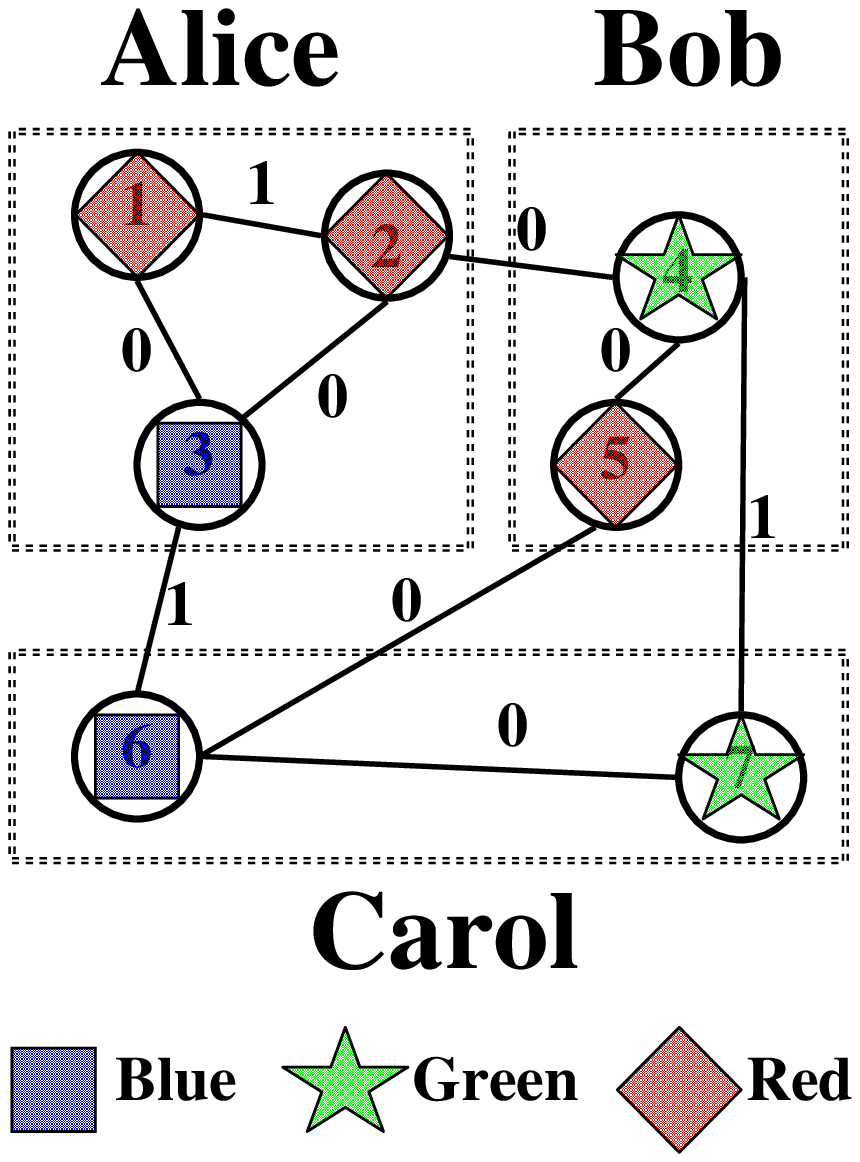} \label{fig:svc} }\hspace{0.4in}
\subfigure[Each Party's Share in
$\mu(x)$=3]{\centering\includegraphics[angle=0,
width=0.52\linewidth]{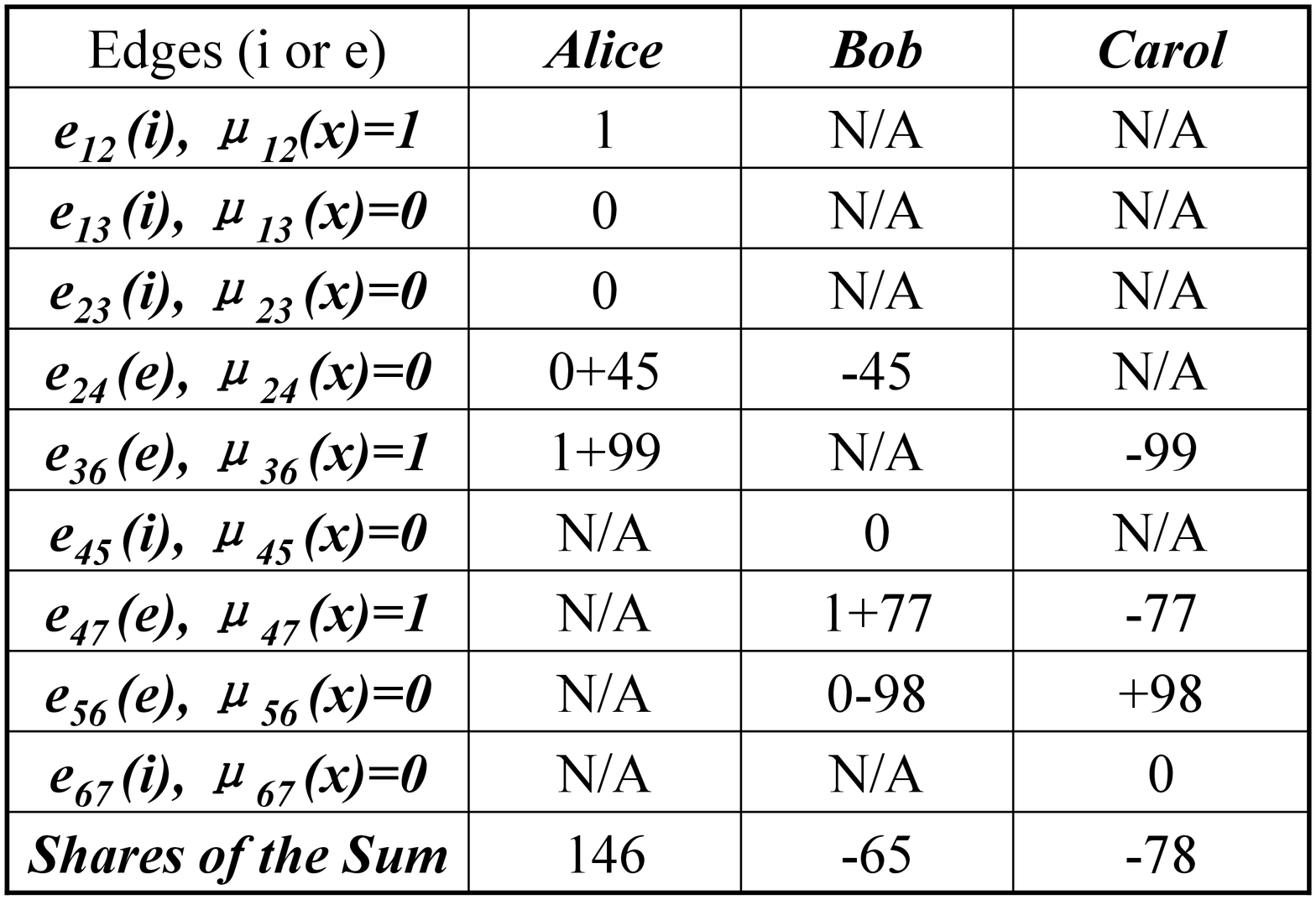} \label{fig:svc2} }
\caption[Example of Distributed Combinatorial Optimization] {Example
for Secure Conflict Computation} \label{fig:scc}
\end{figure}

Fig. \ref{fig:svc} and \ref{fig:svc2} provide a simple example of Secure Conflict Computation. For each external edge, such as $e_{24}, e_{36}$, $e_{47}$ and $e_{56}$, the scalar product computation is
carried out between the two involved parties. As shown in Fig. \ref{fig:svc2}, the returned value in secure scalar product computation is the sum of $\mu_{ij}(x)$ and a random number $r_{ij}$ generated by the other party (Note: the scalar product of every external edge is securely computed only once). To generate each party's output, each party then sums all the returned values with its number of
conflicting internal edges, and finally subtracts all its generated random numbers (the sum of all parties' outputs should be $\mu(x)$). This protocol is provably secure.

\subsubsection{Secure Comparison}

Given two solutions $x$ and $x'$, we can let each party $P_a$ generate its shares $\mu_a(x)$ and $\mu_a(x')$ in $\mu(x)$ and $\mu(x')$ respectively, ensuring that $\mu(x)=\sum_{\forall a}\mu_a(x)$ and $\mu(x')=\sum_{\forall a}\mu_a(x')$. However, $\mu(x)$ and $\mu(x')$ cannot be revealed to any party because they may lead to additional privacy breach in generating neighborhoods.

\begin{algorithm}[!h]
	\begin{algorithmic}[1]
		\renewcommand{\algorithmicrequire}{\textbf{Input:}}
		\renewcommand{\algorithmicensure}{\textbf{Output:}}
		\REQUIRE $G=(V,E)$, partitioned sum of scalar products $\mu(x)$ and
		$\mu(x')$ held by $m$ parties;
		
		\ENSURE $\mu(x')\geq \mu(x)$ or $\mu(x')< \mu(x)$;
		
		\COMMENT{Each party $\forall a\in[1,m], P_a$ has a share of $\mu(x)$
			and $\mu(x')$: $\mu_a(x)$, $\mu_a(x')$}
		
		\FORALL{parties $P_a, a\in[1,m]$}
		
		\STATE Secure Comparing $\mu(x)$ and $\mu(x')$ using its share
		$\mu_a(x)$ and $\mu_a(x')$;
		
		\ENDFOR
		
	\end{algorithmic}
	\caption{$SecCompare(\mu(x'),\mu(x),m)$}\label{algo:sc}
\end{algorithm}

To securely compare $\mu(x')$ and $\mu(x)$ using all parties' partitioned shares (as shown in Algorithm \ref{algo:sc}), we can adopt Yao's secure comparison protocol \cite{Yao86} that compares any two integers securely. Yao's generic method is one of the most efficient methods known. Recently, FairplayMP \cite{fairplaymp} is developed for secure multiparty computation where $m$ untrusted parties with private inputs $x_1,\dots,x_m$ wish to jointly compute a specific functions $f_i(x_1,\dots,x_m)$ and securely obtain each party's output. Specifically, FairplayMP is implemented by writing functions with a high-level language -- Secure Function Definition Language (SFDL) 2.0 and describing the participating parties using a configuration file. The function is then complied into a Boolean circuit, and a distributed evaluation of the circuit is finally performed without revealing anything. In our secure comparison, each party has the shares in $\mu(x)$ and $\mu(x')$ respectively; the function should be $\mu(x')-\mu(x)$ (computed by all $m$ parties); the output $\mu(x')<\mu(x)$ or $\mu(x')\geq \mu(x)$ is generated for designated parties. Finally, we have to compare $\mu(x)$ and $0$ to decide whether to terminate the optimization or not. With the same protocol, we can let all parties securely compare $\mu(x)$ and $1$ with inputs from all parties (the shares of $\mu(x)$ and $1$). The output $\mu(x)<1$ or $\mu(x)\geq 1$ is similarly generated for corresponding parties.

\subsection{Privacy Preserving Tabu Search}
\label{sec:ppls}

\subsubsection{Distributed Tabu List}
In each iteration of Tabucol algorithm, a set of neighborhoods is generated for current solution $x$ in local search. While generating a new neighborhood $x'$, if the new vertex color in $x'$ has already been considered as a taboo in previous iterations, $x'$ should be abandoned except in the case that that $\mu(x')<\mu(x)$ (aspiration criteria). Thus, checking whether the tabu list $T$ includes a move $[x\rightarrow x']$ (we denote the color change for only one vertex as a move) is a key step in privacy-preserving tabu search.

Like Hertz\cite{Hertz}, we define every taboo in $T$ as a visited color for a specific vertex. If all parties
share a global tabu list, the taboos in $T$ may reveal private information (each parties' vertices and unallowed colors). To protect such information, we can let each party $P_a, a\in[0,m-1]$ establish a tabu list $T_a$ to record the latest color of one of its vertices that leads to a critical move ($\mu(x)$ reduced move). We thus have $T=\bigcup_{\forall a} T_a$. Intuitively, since each party manipulates its own tabu list, distributed tabu list does not breach privacy.

\subsubsection{Inference Attack in the Outputs of Local Search}

Since tabu search is a local search based technique, the computation in local search (generating neighborhoods) should be secured. However, even if computing number of conflicts and comparing $\mu(x)$ and $\mu(x')$ are secured by homomorphic encryption and secure comparison, potential inference attack still exists in the outputs of secure computation adapted local search.

In a distributed graph (as shown in Fig. \ref{fig:vertex}), an edge is either internal (belongs to one party) or external (belongs to two parties). Similarly, we can divide the vertices into two complementary groups: 1. \textbf{Inner Vertex}: a vertex that is not on any external edge (i.e. $v_1$ in Fig. \ref{fig:svc}) 2. \textbf{Border Vertex}: a vertex that is on at least one external edge (might be on both of external edges and internal edges, i.e. $v_2$ in Fig. \ref{fig:svc}).

We first assume that no secure comparison is applied and $v_i$ is the color changed vertex in generating $x$'s neighborhood $x'$ ($\mu(x)$ is revealed in each iteration). If $v_i$ is an inner vertex, since $v_i$ is disconnected with other parties' vertices, nothing is revealed to each other. However, if $v_i$ is a border vertex, we first look at a simple case: $v_i$ is an endpoint of only one external edge (i.e. $v_5$ in Fig. \ref{fig:vertex}). Assume that $P_a$ and $P_b$ hold $v_i$ (color $x_i$) and $v_j$ (color $x_j$) respectively where $e_{ij}$ is the external edge. While changing $v_i$'s color $x_i$ to $x_i'$, since the number of $P_a$'s conflicting internal edges in $x$ and $x'$ are known to $P_a$ and the total number of conflicts on the remaining edges (including all the external edges except $e_{ij}$ and other parties' internal edges) are identical in $x$ and $x'$, the difference between $\mu_{ij}(x)$ ($e_{ij}$ has conflict or not in solution $x$) and $\mu_{ij}(x')$ ($e_{ij}$ has conflict or not in solution $x'$) can be inferred by $P_a$. We thus discuss the potential privacy breach in three cases where $\delta=\mu_{ij}(x')-\mu_{ij}(x)$:

\begin{enumerate}
\item $\delta=1$. $P_a$ can infer $x_j=x_i'$. The color of $v_j$ is learnt by $P_a$. 

\item $\delta=0$. $P_a$ can infer $x_j\ne x_i$ and
$x_j\ne x_i'$. It is secure for large $k$. For small $k$, $P_a$ can
guess $x_j$ as any color out of remaining k-2 colors except $x_i$
and $x_i'$.

\item $\delta=-1$. $P_a$ can infer $x_j=x_i$. The color of $v_j$ is learnt by
$P_a$. 

\end{enumerate}

In a more general case, if $v_i$ is adjacent to $n$ border vertices from other parties and no secure comparison is implemented in tabu search, moving $v_i$ to generate neighborhoods is under the risk of inference attack:

\begin{lemma}
Given a coloring solution $x$, we generate $x$'s neighborhood $x'$ by changing the color of a border vertex $v_i$ of party $P_a$ which is adjacent to $n$ vertices $\{\forall j\in[1,n], v_j\}$ from other parties. If no secure comparison is implemented in tabu search ($P_a$ knows $\delta=\sum_{\forall j\in[1,n]}[\mu_{ij}(x')-\mu_{ij}(x)]$), precisely one of the following holds:

\begin{enumerate}
	\item if $\delta\geq 0$, $P_a$ can guess the color of $v_j$ as $x_j=x_i'$ with $Prob[x_j=x_i']=
	\frac{\delta+\lfloor\frac{n+\delta}{2}\rfloor}{2n}$ and guess $x_j=x_i$ with $Prob[x_j=x_i]=
	\frac{\lfloor\frac{n-\delta}{2}\rfloor}{2n}$ where $\delta\in[0,n]$;
	
	\item if $\delta<0$, $P_a$ can guess the color of $v_j$ as $x_j=x_i$ with $Prob[x_j=x_i]=
	\frac{-\delta+\lfloor\frac{n-\delta}{2}\rfloor}{2n}$ and guess $x_j=x_i'$ with $Prob[x_j=x_i']=
	\frac{\lfloor\frac{n+\delta}{2}\rfloor}{2n}$ where $\delta\in[-n,-1]$.
	
\end{enumerate}

\label{lemma:pri}
\end{lemma}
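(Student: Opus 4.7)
The plan is to decompose the adversary's observation $\delta$ into contributions from each individual neighbor and then invert this decomposition to obtain the posterior guess. Since $P_a$ chooses $x_i' \ne x_i$, for each neighbor $v_j$ of $v_i$ the quantity $\mu_{ij}(x') - \mu_{ij}(x)$ takes exactly one of three values: $+1$ if $x_j = x_i'$, $-1$ if $x_j = x_i$, and $0$ if $x_j \notin \{x_i, x_i'\}$. Letting $a$, $b$, $c$ denote the sizes of these three groups, we have $a+b+c=n$ and, crucially, $\delta = a-b$, which is the only equation on $(a,b,c)$ that $P_a$ can read off from its observation.

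Given $\delta$, I would next characterize the feasible set of pairs $(a,b)$. For $\delta \geq 0$, writing $b = a - \delta$ and requiring $b \geq 0$ together with $a+b = 2a - \delta \leq n$ forces $a$ to range over the integer interval $\{\delta, \delta+1, \ldots, \lfloor (n+\delta)/2\rfloor\}$, and correspondingly $b$ over $\{0, 1, \ldots, \lfloor(n-\delta)/2\rfloor\}$. Since $P_a$ learns nothing beyond $\delta$ about the individual colors of $v_1,\ldots,v_n$, it is natural to treat each feasible value of $a$ as equally likely; then the adversary's expected value of $a$ is the midpoint of the integer interval, $E[a] = (\delta + \lfloor(n+\delta)/2\rfloor)/2$, and similarly $E[b] = \lfloor(n-\delta)/2\rfloor/2$. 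Because the $n$ neighbors are exchangeable, the probability that a particular one has color $x_i'$ (respectively $x_i$) is $E[a]/n$ (respectively $E[b]/n$), yielding exactly the formulas stated in part (1) of the lemma.

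The case $\delta < 0$ follows by a symmetric argument, swapping the roles of $a$ and $b$: now $b$ ranges over $\{-\delta, \ldots, \lfloor(n-\delta)/2\rfloor\}$ and $a$ over $\{0, \ldots, \lfloor(n+\delta)/2\rfloor\}$, and averaging as above gives the two claimed probabilities in part (2). The main obstacle is conceptual rather than computational: it is the implicit assumption that the adversary places a uniform prior over feasible $(a,b)$ pairs, reflecting the stance that $P_a$ has no side information about the colors of other parties' border vertices beyond $\delta$. A secondary but fiddly point is keeping the floors correct when $n$ and $\delta$ have opposite parities, since the integer interval $[\delta,\lfloor(n+\delta)/2\rfloor]$ can have either an even or an odd number of elements; once this interval is pinned down, however, the arithmetic-mean computation that produces the claimed closed form is routine.
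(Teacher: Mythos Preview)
Your proposal is correct and follows essentially the same argument as the paper's proof. The paper likewise partitions the $n$ external neighbors into three groups (there denoted $\delta_1,\delta_0,\delta_c$ in place of your $a,b,c$), derives the same feasible range $\delta_1\in\{\delta,\ldots,\lfloor(n+\delta)/2\rfloor\}$ for $\delta\ge0$, imposes the same uniform prior over feasible $(\delta_1,\delta_0)$ pairs, and then averages $\delta_1/n$ over that range to obtain the claimed formula; your phrasing of this last step as $E[a]/n$ via the midpoint of an integer interval is just a tidier way of writing the paper's explicit sum $\frac{1}{\lfloor(n-\delta)/2\rfloor+1}\sum_{k=\delta}^{\lfloor(n+\delta)/2\rfloor}\frac{k}{n}$.
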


We prove Lemma \ref{lemma:pri} in Appendix \ref{sec:lem1} (we do not discuss the probability of guessing $x_j$ as any color out of remaining k-2 colors except $x_i$ and $x_i'$ since the probability is fairly small in general). Hence, if $\mu(x)$ is revealed in each iteration (no secure comparison protocol is applied), the probability of guessing any other parties' border vertex color is high. For instance, assuming that $x_i=\textit{Blue}$, $x_i'=\textit{Red}$, $\delta=3$ and $n=5$, $P_a$ can guess that $x_j$ is \textit{Red} with a probability $0.7$ and is \textit{Blue} with a probability $0.1$. If $P_a$ iteratively moves $x_i$ and learns the probabilities in such process, the color of $v_j$ should
be almost determined. Furthermore, if $\delta$ is close to either $n$ or $-n$, the probability of inferring privacy is extremely high, especially in case that $\delta=n$ or $-n$ (the worst privacy breach case for any border vertex on $n$ external edges). Alternatively, if we utilize secure comparison protocol to securely compare $\mu(x')$ and $\mu(x)$ in each iteration, such probabilities can be
comparatively reduced (especially in case that $\delta$ is close to $n$ or $-n$) because $\delta$ is anonymized as any integer in [$-n,n$].

\footnotetext[1]{$\delta_a$ denotes the difference between the number of $P_a$'s conflicting internal edges in solution $x$ and $x'$, which is known to $P_a$ in every iteration}

\begin{lemma}
In the same circumstance of Lemma 1, if we securely compare $\mu(x)$ and $\mu(x')$ (does not reveal $\mu(x)$ and $\mu(x')$), the risk of inference attack can be reduced in general but still exists in two
worst cases with known $\delta_a$ \footnotemark[1] as below:

\begin{enumerate}
\item If $\delta_a\footnotemark[1]=n-1$ and $\mu(x')<\mu(x)$ is the output of the secure comparison, $P_a$ learns $\forall j\in[1,n],x_j=x_i$;

\item If $\delta_a\footnotemark[1]=-n$ and $\mu(x')\geq\mu(x)$ is the output of the secure comparison, $P_a$ learns $\forall j\in[1,n], x_j=x_i'$.

\end{enumerate}

 \label{lemma:sc_pri}
\end{lemma}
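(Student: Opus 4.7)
The plan is to reduce the claim to a short arithmetic identity combined with the one-sided bound that the secure-comparison output imposes on the unknown portion of $\mu(x')-\mu(x)$. Since $x$ and $x'$ agree on every vertex except $v_i$, only edges incident to $v_i$ can change conflict status. Splitting those edges into $P_a$'s internal edges (whose contribution $\delta_a$ the party observes directly) and the $n$ external edges (whose contribution $\delta=\sum_{j=1}^n[\mu_{ij}(x')-\mu_{ij}(x)]\in[-n,n]$ is hidden from $P_a$), we obtain
\[
\mu(x')-\mu(x) \;=\; \delta_a + \delta.
\]

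Next, the secure-comparison output converts the sign of $\mu(x')-\mu(x)$ into a one-sided bound on $\delta$: the output $\mu(x')<\mu(x)$ is equivalent to $\delta<-\delta_a$, while $\mu(x')\geq\mu(x)$ is equivalent to $\delta\geq -\delta_a$. Substituting the two extreme values of $\delta_a$ from the statement and intersecting with $|\delta|\leq n$ pins $\delta$ to a single value. In case~1, $\delta_a=n-1$ together with $\mu(x')<\mu(x)$ gives $\delta\leq -n$, hence $\delta=-n$; in case~2, $\delta_a=-n$ together with $\mu(x')\geq\mu(x)$ gives $\delta\geq n$, hence $\delta=n$.

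The final step is a per-edge unpacking of an extremal $\delta$. Each $\mu_{ij}(x),\mu_{ij}(x')\in\{0,1\}$, so every summand of $\delta$ lies in $\{-1,0,1\}$. Therefore $\delta=-n$ forces every summand to equal $-1$, i.e., $\mu_{ij}(x)=1$ and $\mu_{ij}(x')=0$ for all $j$, which by the definition of $\mu_{ij}$ is equivalent to $x_j=x_i$ for every neighbor $v_j$; symmetrically, $\delta=n$ forces $\mu_{ij}(x)=0$, $\mu_{ij}(x')=1$ for all $j$, equivalent to $x_j=x_i'$ for every $v_j$. Reading these conclusions back into the two cases of the lemma completes the argument.

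The main obstacle I expect is not any calculation but the bookkeeping of the decomposition. I need to confirm that (i) no internal edges of other parties contribute to $\mu(x')-\mu(x)$, which follows because only $v_i$ moved, and (ii) $\delta_a$ is taken with the same sign convention as $\delta$ in Lemma~\ref{lemma:pri}, so that the identity $\mu(x')-\mu(x)=\delta_a+\delta$ is an equality rather than differing by a sign. Once these points are pinned down, the extremality of $\delta=\pm n$ admits no cancellation among the $n$ summands, and the colors of all adjacent border vertices are forced as claimed.
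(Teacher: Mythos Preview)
Your argument for the two worst cases is correct and is exactly the paper's approach: write $\mu(x')-\mu(x)=\delta_a+\delta$, use the comparison sign to get a one-sided inequality on $\delta$, intersect with $\delta\in[-n,n]$ to pin $\delta$ to $\pm n$, and then invoke the extremal case of Lemma~\ref{lemma:pri} (equivalently, your per-edge unpacking) to force every $x_j$.

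The one thing you have not addressed is the first clause of the lemma, namely that secure comparison \emph{reduces} the inference risk in general. The paper handles this by running through the full range of $\delta_a$: when $\delta_a<-n$ (respectively $\delta_a\geq n$) the output $\mu(x')<\mu(x)$ (respectively $\mu(x')\geq\mu(x)$) leaves $\delta$ unconstrained in $[-n,n]$, and for intermediate $\delta_a$ the output gives only $\delta<-\delta_a$ or $\delta\geq-\delta_a$, a strictly weaker constraint than learning $\delta$ exactly as in Lemma~\ref{lemma:pri}. This is not hard, but it is part of the statement and your proposal currently omits it.
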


\begin{figure*}[h]
\centering \subfigure[Worst Case
1]{\centering\includegraphics[angle=0,
width=0.5\linewidth]{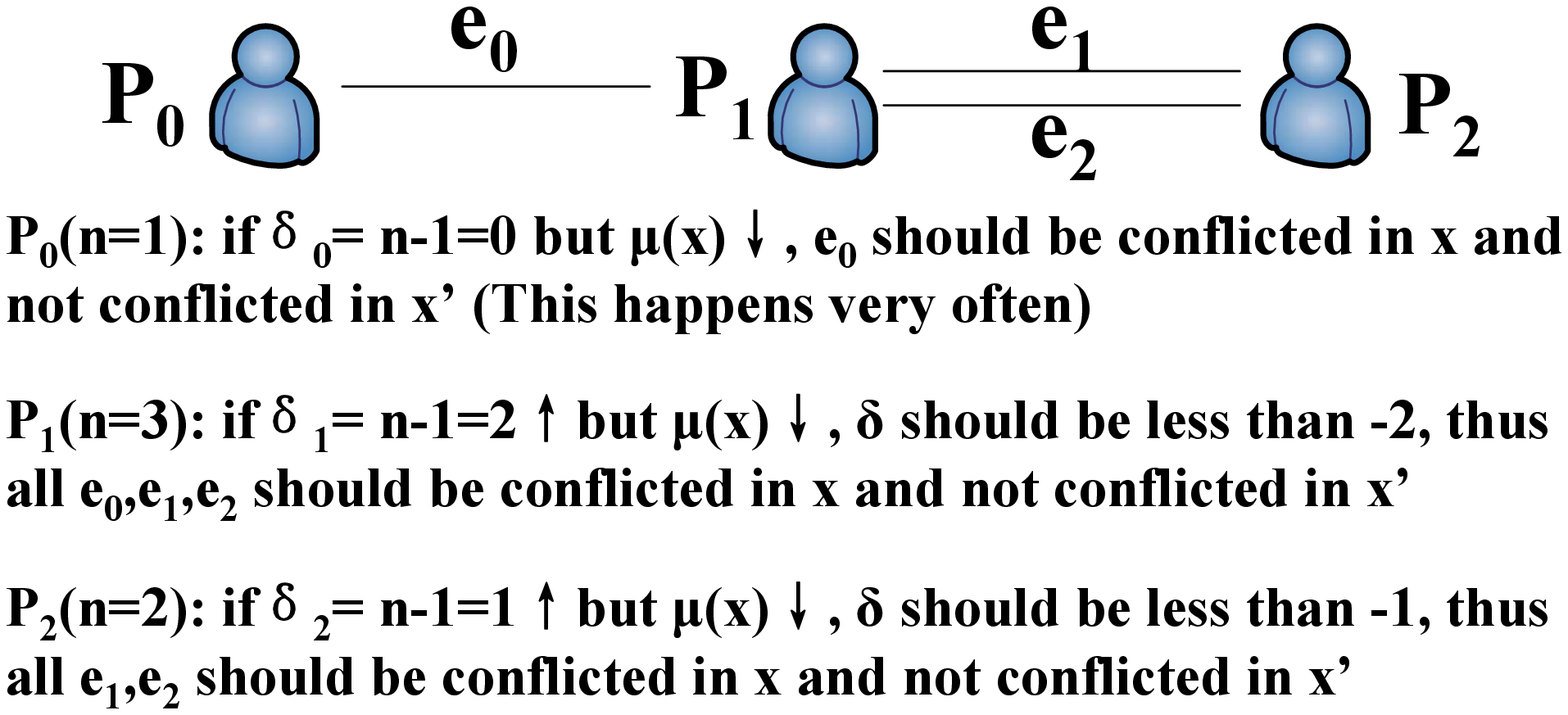} \label{fig:wcase1}
}\hspace{-0.1in}\subfigure[Worst Case 2]{\centering\includegraphics[angle=0,
width=0.5\linewidth]{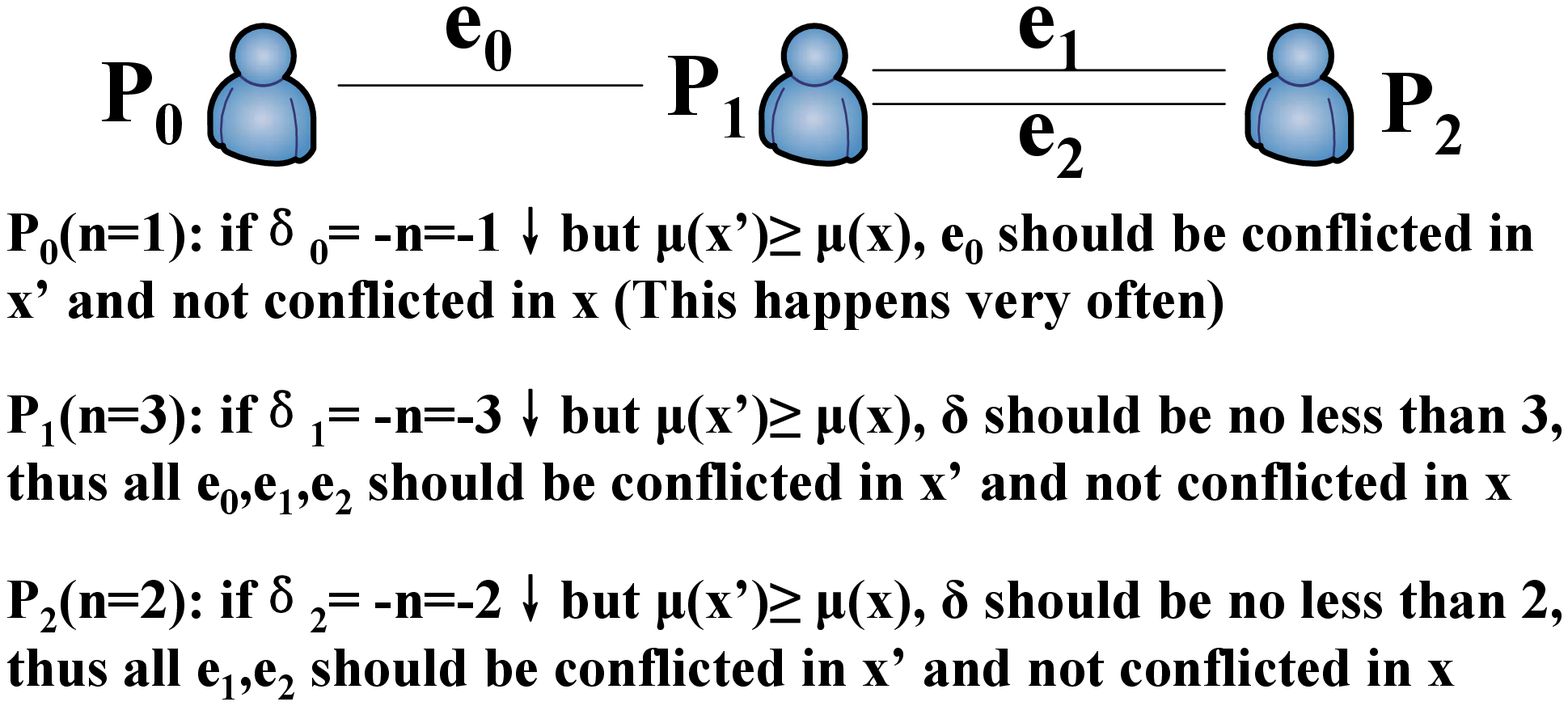} \label{fig:wcase2} } \caption{An
Example for Inference Attack on Secure Comparison [$P_1$(n=3) has a
border vertex that is adjacent to one vertex of $P_0$(n=1) and two
vertices of $P_2$(n=2)]} \label{fig:wcase}
\end{figure*}

As described in Lemma \ref{lemma:sc_pri} (proved in Appendix \ref{sec:lem2}), the worst privacy breach cases cannot be resolved by secure comparison protocol. We show an example for two worst cases in Fig. \ref{fig:wcase}. While changing the color of $P_0$, $P_1$, or $P_2$'s one border vertex to generate neighborhoods, possible privacy breach occurs in these worst cases. For example, $P_0$ moves its border vertex: if the number of $P_0$'s conflicting internal edges has not been changed in the move $x\rightarrow x'$ and $\mu(x')<\mu(x)$ is the comparison result (both are known to $P_0$), the only external edge $e_0$ should be conflicted in $x$ and not conflicted in $x'$. Thus $P_0$
can infer that the other endpoint's color ($P_1$'s one border vertex) is the same as its border vertex's color in $x$ (before move). This happens very often.

Therefore, we will try to eliminate such inference attack and reduce the guessing probability even more in the following.

\subsubsection{Synchronous Move in Local Search}

Since the inference attack results from the output that should be revealed in each iteration (the result of comparing $\mu(x)$ and $\mu(x')$), rather than the data within the secured computation function, it is extremely difficult to resolve this issue with a secure protocol in which such comparison results are not revealed for selecting better solution step by step (and still retaining good efficiency). Alternatively, we can eliminate such inference attack and reduce the probabilities on guessing conflicted external edges and the colors of other parties' border vertices by slightly modifying local search.

Specifically, if any border vertex $v_i$ is chosen for move while generating neighborhoods for one party, we can let another party to locally change the color of one of its vertex (either inner or border) or opt to \emph{keep invariant}. We denote this mechanism as \textbf{synchronous move}. The inferences can be eliminated and the guessing possibilities can be greatly reduced with synchronous move (we prove it in Appendix \ref{sec:lem3}).

\begin{lemma}
	Synchronous move resolves the inference attack stated in Lemma \ref{lemma:pri} and \ref{lemma:sc_pri}.
\end{lemma}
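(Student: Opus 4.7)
The plan is to show that synchronous move destroys the key arithmetic identity that both inference attacks rely on. In Lemma 1, the attack works because $P_a$ can isolate $\delta=\sum_j[\mu_{ij}(x')-\mu_{ij}(x)]$ by subtracting its known internal contribution $\delta_a$ from the publicly revealed $\mu(x')-\mu(x)$, using the fact that all other edges in $G$ are unchanged by a move that touches only $v_i$. In Lemma 2, the analogous attack uses the sign of $\delta_a+\delta$ rather than its value. Once a second party simultaneously changes (or keeps) the color of one of its own vertices, the quantity $P_a$ observes contains an extra, unknown term, and this term suffices to prevent $P_a$ from reconstructing $\delta$ or its sign with certainty.

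First I would fix notation. Suppose $P_a$ moves its border vertex $v_i$ from color $x_i$ to $x_i'$, and a synchronous mover $P_c$ (any party other than $P_a$) changes some vertex $v_p$ from $x_p$ to $x_p'$, possibly with $x_p=x_p'$ (the invariant option). Let $\Delta_c=\Delta_c^{\mathrm{int}}+\Delta_c^{\mathrm{ext}}$, where $\Delta_c^{\mathrm{int}}$ is the change in $P_c$'s internal conflicts and $\Delta_c^{\mathrm{ext}}$ is the change in external conflicts incident to $v_p$ (excluding $e_{ij}$-type edges at $v_i$, which are already counted in $\delta$). A direct accounting of all affected edges gives
\begin{equation*}
\mu(x')-\mu(x)\;=\;\delta_a+\delta+\Delta_c,
\end{equation*}
since all remaining edges have both endpoints unchanged. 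Neither $\Delta_c^{\mathrm{int}}$ nor $\Delta_c^{\mathrm{ext}}$ is visible to $P_a$: the first depends on $P_c$'s private subgraph and coloring, and the second on colors that belong to the parties incident with $v_p$.

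For the Lemma 1 setting (no secure comparison), I would argue that after subtracting its own $\delta_a$, $P_a$ is left with $\delta+\Delta_c$, not $\delta$; since $\Delta_c$ ranges over a nontrivial set (at least $\{-1,0,+1\}$ as $P_c$ picks different move options, including the invariant one), every value of $\delta$ consistent with $\delta+\Delta_c=\text{observed}$ remains plausible, so the case analysis $\delta\in\{-1,0,1\}$ of Lemma 1 no longer pins down any particular $\mu_{ij}(x')-\mu_{ij}(x)$. For the Lemma 2 setting, I would revisit the two worst cases. In the first worst case ($\delta_a=n-1$, comparison output $\mu(x')<\mu(x)$), the prior argument forced $\delta=-n$; under synchronous move the signed total is $\delta_a+\delta+\Delta_c<0$, which now admits $(\delta,\Delta_c)$ with $\delta>-n$ whenever $\Delta_c<0$, so at least one $x_j\neq x_i$ becomes possible and the ``all $x_j=x_i$'' conclusion fails. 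The second worst case is symmetric. In both cases it suffices to exhibit, for each forbidden coloring combination, a consistent choice of $(\Delta_c^{\mathrm{int}},\Delta_c^{\mathrm{ext}})$; I would enumerate the options corresponding to $P_c$ flipping an inner vertex, flipping a border vertex, or staying invariant.

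The main obstacle I anticipate is handling the invariant option cleanly: if $P_a$ could detect that $P_c$ chose to keep invariant, then $\Delta_c=0$ deterministically and the original attack would return. So the proof must appeal to the fact that $P_c$'s decision (which vertex, and whether to move) is not revealed to $P_a$ beyond what leaks through the aggregate $\mu(x')-\mu(x)$ or its sign; this is exactly what the secure conflict computation and secure comparison protocols of Section 3.1 guarantee. A secondary subtlety is ensuring the argument covers the general case where several of $v_i$'s external neighbors belong to $P_c$ itself, since then $\delta$ and $\Delta_c^{\mathrm{ext}}$ share edges; I would handle this by redefining $\delta$ to exclude those shared edges and folding them into $\Delta_c^{\mathrm{ext}}$, after which the same ambiguity argument applies.
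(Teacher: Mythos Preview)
Your proposal is correct and follows essentially the same route as the paper: both argue that the synchronous move injects an additional unknown summand (your $\Delta_c$; the paper's $\delta_b+\delta'$) into the accounting identity $\mu(x')-\mu(x)=\delta_a+\delta+\cdots$, so that $P_a$ can no longer isolate $\delta$ from what is revealed, and in particular the two worst cases of Lemma~\ref{lemma:sc_pri} no longer force $\delta=\pm n$. Your write-up is somewhat more careful than the paper's---you explicitly flag the invariant option and the overlap case where some of $v_i$'s external neighbors lie in the synchronous mover's partition---whereas the paper is terser, treats only the secure-comparison setting, and handles the adjacency overlap via a separate observation that $P_b$ cannot tell which of $P_a$'s border vertices moved.
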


Apparently, some generated neighborhoods may include two moves. In addition, since the number of moves is at most 2, it is still effective to maintain the performance of local search. Hence, the performance is supposed to be close to Tabucol algorithm \cite{Hertz} (we show that in the experiments).

To improve the search performance, the vertex (either inner or border vertex) whose color is changed is \textit{preferentially chosen from the vertices that are the endpoints of edges with conflicts}. Since only the conflicting internal edges are known to corresponding parties, we can iteratively let each party change the color of the endpoints of its conflicting internal edges. When moving an inner vertex, the party can independently compare two unknown numbers $\mu(x')$ and $\mu(x)$ (by comparing its shares in $\mu(x')$ and $\mu(x)$) and check its tabu list along with specific operations on its vertices colors in $x'$ (update $x$ and tabu list with $x'$, or abandon $x'$). On the other hand, moving a border vertex requires secure computation of all scalar products and secure comparison of $\mu(x')$ and $\mu(x)$.  Hence, while changing a border vertex color by one party, our privacy preserving tabu search protocol calls synchronous move, secure conflict computation and secure comparison. The result of secure comparison is only revealed to the two involved parties. These parties use this information to check and update their tabu lists along with specific operations on their
vertices' colors in $x'$.

\subsubsection{Privacy Preserving Tabu Search}

\begin{figure*}[tb]
\centering
\includegraphics[angle=270, width=1\linewidth]{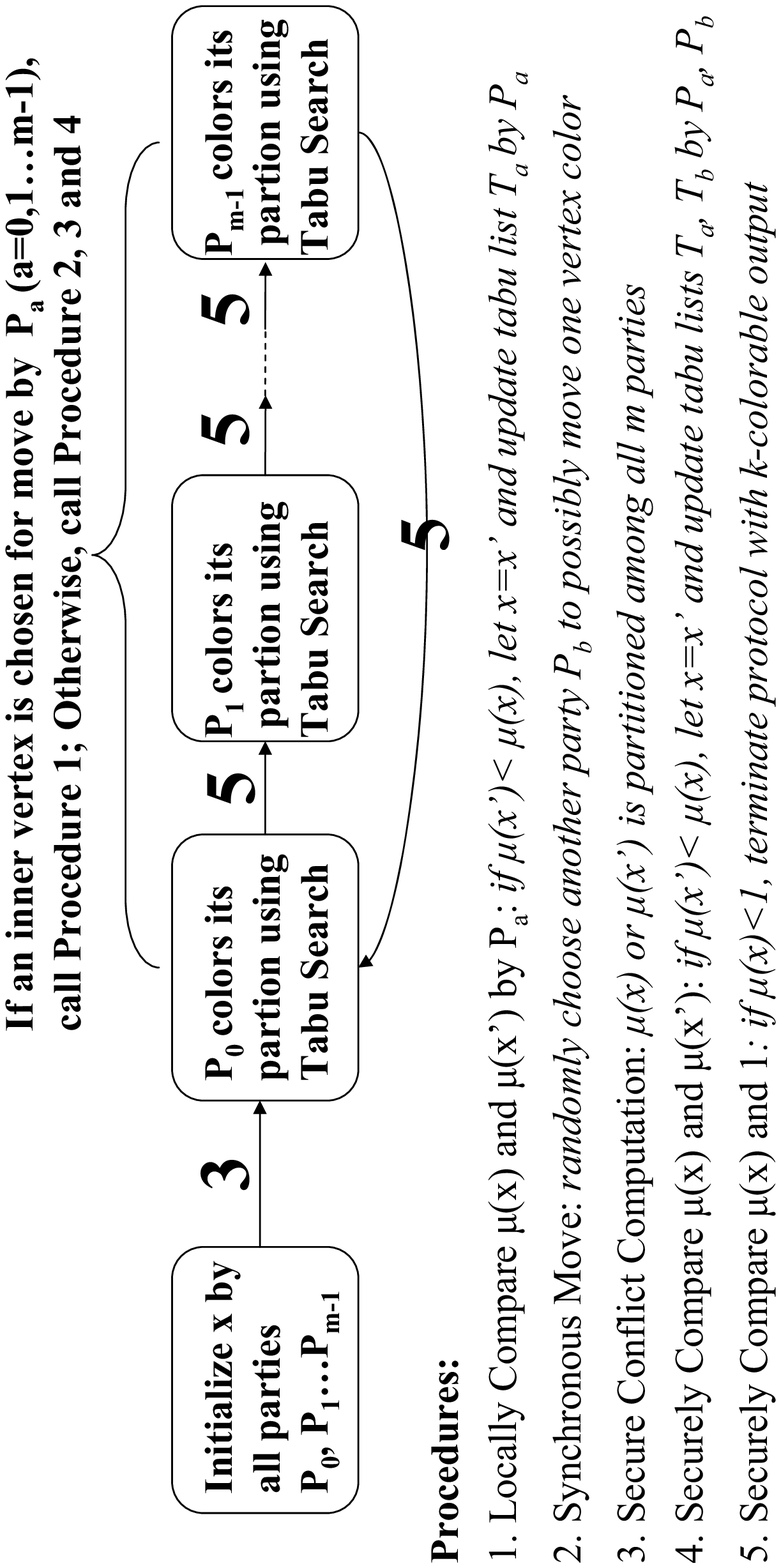}
\label{fig:sparse} \caption[Example of Distributed Combinatorial
Optimization] {Privacy Preserving Tabu Search Protocol}
\label{fig:ppts}
\end{figure*}

In our secure tabu search protocol, each party iteratively executes Tabu search on its local graph to eliminate all the conflicting internal edges. (Each party's local graph  is a subgraph of $G$. Hence, if the local graph is not $k$-colorable, the protocol can be terminated with output that $G$ is not $k$-colorable). During local search, each party moves inner or border vertices (with synchronous move).  If the party executing tabu search in the given iteration does not have any conflicting internal edge (local k-coloring is attained), this party may move  one of its border vertices that triggers  synchronous move, or skip its local search and pass control to the next party. In addition, in each move, tabu list is locally checked and aspiration criteria is applied to every solution (locally
for inner vertex move or globally for synchronous move in two parties' tabu lists). Finally, in each iteration after a party finishes k-coloring of its local graph, secure comparison protocol is called to check if an optimal solution is obtained (i.e., $\mu(x)=0$). Since the output of secure comparison is divided into $<$ and $\geq$, the algorithm compares $\mu(x)$ with $1$. If $\mu(x)=0$, the secure tabu search is terminated with $k$-colorable output. If $\mu(x)\geq 1$ after a given number of iterations, the tabu search is terminated with an output that $G$ is not $k$-colorable. The detailed steps are presented in Fig. \ref{fig:ppts} and Algorithm \ref{algo:ppts}.

\begin{algorithm}[!h]
\begin{algorithmic}[1]

\renewcommand{\algorithmicrequire}{\textbf{Input:}}
\renewcommand{\algorithmicensure}{\textbf{Output:}}

\REQUIRE $m$ parties $P_0,\dots,P_{m-1}$, Graph $G=(V,E)$;

\STATE initialize a solution $x$: randomly coloring all vertices by
all parties;

\STATE call Secure Conflict Computation: each party hold a share
of $\mu(x)$: $\forall a\in[0,m-1], \mu_a(x);$

\FOR{party $P_a$, $a\in[0,m-1]$}

\WHILE{the number of $P_a$'s conflicting internal edges is greater
than $0$}

\STATE $P_a$ generates a neighborhood $x'$ by changing the color of
vertex $v_i$ that is the endpoint of one of its conflicting internal
edges: $x_i\rightarrow x_i'$;

\IF{$v_i$ is an inner vertex}

\STATE $P_a$ locally compare its shares in $\mu_a(x)$ and
$\mu_a(x')$ and check its tabu list $T_a$;

\IF{$\mu_a(x')<\mu_a(x)$ \footnotemark[2]}

\STATE let $x=x'$, update the tabu list $T_a$ and continue;

\ENDIF

\ELSE

\STATE pick a random party $P_b$ ($b\ne a$) to possibly change the
color of vertex $v_j$: $x_j\rightarrow x_j'$ ($x'$ is generated by
synchronous move);

\STATE call Secure Conflict Computation and Secure Comparison for
$\mu(x')$ and $\mu(x)$;

\IF{$\mu(x')<\mu(x)$ \footnotemark[3]}

\STATE let $x=x'$, update the tabu list $T_a$, $T_b$ by $P_a$, $P_b$
respectively and continue;

\ENDIF

\ENDIF

\STATE if $x$'s $rep$ neighborhoods are generated ($rep$ is given),
choose a best neighborhood $x'$ where $(v_i,x_i')\notin T_a$, let
$x=x'$ and update $T_a$ ($v_i$ is an inner vertex of $P_a$);

\STATE if $P_a$'s partition in $G$ cannot be $k$-colored in maximum
number of iterations, terminates protocol with output: $G$ is not
k-colorable;

\ENDWHILE

\STATE if $P_a$'s partition is $k$-colored, $P_a$ can move its
border vertices to generate new $k$-coloring solutions (call synchronous move, secure conflict computation and secure comparison);

\STATE call Secure Comparison to compare $\mu(x)$ and 1 among all
$m$ parties;

\STATE if $\mu(x)<1$, terminates protocol with output: $G$ is
k-colorable by solution $x$;

\STATE if $\mu(x)\geq1$ after a given number of iterations,
terminates protocol with output: $G$ is not k-colorable;

\STATE $a=(a+1)\mod m$;

\ENDFOR

\end{algorithmic}
\caption{Privacy Preserving Tabu Search}\label{algo:ppts}
\end{algorithm}

\footnotetext[2]{Aspiration Criteria: if $\mu(x')<\mu(x), (v_i, x_i')\in T_a$ is allowed; Otherwise, abandon $x'$.}

\footnotetext[3]{Aspiration Criteria: if $\mu(x')<\mu(x), (v_i,x_i')\in T_a$ and $(v_j,x_j')\in T_b$ are allowed; Otherwise, abandon $x'$.}

\section{Analysis}
\label{sec:security}
In this section, we give analysis for security and complexity.

\subsection{Security Analysis}

The exchanged messages of PPTS protocol occurs primarily in two sub-protocols: secure scalar product computation and secure comparison. Thus, the security of our algorithm depends on the
security of these two algorithms. Given these secure protocols, the security proof comes down to demonstrating that the output of those protocols can be simulated knowing one's own input and the final results. The composition theorem \cite{Goldreichenc} then enables completing the proof of security.

\begin{theorem}
PPTS privately solves the $k$-coloring problem where each party $P_a$ learns:

\begin{enumerate}
\item Its share in the secure conflict computation

\item The secure comparison result in synchronous moves involving $P_a$
\end{enumerate}

 \label{thm:sec}
\end{theorem}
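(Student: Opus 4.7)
The plan is to prove the theorem via the standard simulation paradigm for the honest-but-curious model, invoking the composition theorem of \cite{Goldreichenc} so that the security reduces to the security of the two cryptographic building blocks already deployed: Goethals et al.'s secure scalar product and the FairplayMP-based secure comparison. Concretely, I would first decompose the view of each party $P_a$ during an execution of PPTS into: (i) $P_a$'s own private input (its subgraph, internal edges, tabu list $T_a$, random coins, and color choices for its own vertices); (ii) the messages exchanged during each secure scalar product invocation on external edges incident to $P_a$'s vertices; (iii) the messages exchanged and the boolean output obtained in each secure comparison invocation in which $P_a$ participates; and (iv) the public control-flow information, namely which vertex of which party is being moved in each iteration (vertex identities of border endpoints are already known from the external-edge pattern, which is not secret) and, for synchronous moves, the identity of the randomly chosen synchronizing party $P_b$.

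Next I would construct a simulator $S_a$ that, given only $P_a$'s private input together with the claimed outputs listed in items (1) and (2) of the theorem, produces a transcript computationally indistinguishable from the real view. The simulator proceeds iteration by iteration, mirroring Algorithm~\ref{algo:ppts}: for every secure scalar product call, it invokes the simulator $S_{\mathrm{SP}}$ guaranteed by the security of \cite{Goethals-scalprod}, feeding it $P_a$'s local color vector and a freshly sampled share consistent with the final share $\mu_a(x)$ promised as output; for every secure comparison, it invokes FairplayMP's simulator $S_{\mathrm{SC}}$ on $P_a$'s partitioned shares of $\mu(x),\mu(x')$ and the comparison bit delivered to $P_a$. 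All local operations -- neighborhood generation over $P_a$'s own conflicting internal edges, tabu list updates, and aspiration checks -- are deterministic functions of $P_a$'s private input and of the outputs already placed in the simulated transcript, so $S_a$ reproduces them exactly.

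To conclude, I would appeal to the sequential composition theorem of \cite{Goldreichenc}: since every cryptographic message in the real view arises from an invocation of a sub-protocol that admits a simulator indistinguishable from its real execution, the hybrid obtained by replacing each sub-protocol call by its simulated transcript is indistinguishable from the real view. The remaining non-cryptographic elements of the view -- the identity of the moved vertex, the identity of the random synchronizing partner $P_b$, and the comparison bits returned by secure comparison -- are exactly the quantities enumerated in the theorem statement (or are public by the problem setup, e.g.\ the existence of external edges and hence of border vertices), so $S_a$'s transcript is indistinguishable from $P_a$'s real view. Composing across all parties $a \in [0,m-1]$ yields the claim.

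The main obstacle I anticipate is the boundary between \emph{protocol} security (which the simulation argument is designed to handle) and the \emph{leakage inherent in the revealed outputs themselves}. The secure comparison bit delivered to $P_a$ is, by design, part of $P_a$'s output, but as Lemmas~\ref{lemma:pri} and \ref{lemma:sc_pri} show, even this legitimate output can be post-processed into probabilistic inferences about other parties' border-vertex colors. The theorem does not claim information-theoretic secrecy of those colors; rather, it asserts that PPTS reveals nothing \emph{beyond} the enumerated shares and comparison bits. The delicate point in the write-up is therefore to state the theorem explicitly in this ``ideal-output'' form and to note that the synchronous-move mechanism (Lemma~3) has already been invoked to constrain the comparison-bit leakage, but is orthogonal to the simulation-based security argument itself.
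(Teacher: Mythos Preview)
Your proposal is correct and follows the standard simulation-plus-composition template, but it differs from the paper's own proof in one substantive respect that is worth flagging.

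You treat the sequence of secure-comparison bits as part of the ideal output handed to the simulator (which is exactly what item~(2) of the theorem statement licenses), and then invoke FairplayMP's own simulator $S_{\mathrm{SC}}$ on those bits. The paper, by contrast, takes a narrower view of the simulator's input: it gives the simulator only $P_a$'s subgraph and the \emph{final} coloring, and then attempts to \emph{regenerate} the entire sequence of comparison signs by running an ``inverse optimization'' of tabu search backwards from the final solution, appealing to \cite{AhujaInverseOpt}. The paper itself concedes that this inverse-search simulator is only a meta-heuristic and may fail to be polynomial-time in the worst case.

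What each approach buys: your route is cleaner, unconditionally polynomial, and matches the theorem statement literally, since the comparison bits are explicitly enumerated as allowed leakage. The paper's route is more ambitious in that it tries to argue the comparison bits themselves carry no information beyond the final coloring, but the argument is heuristic and leaves the polynomial-time requirement of the simulation paradigm unresolved. Your closing paragraph, noting that Lemmas~\ref{lemma:pri}--\ref{lemma:sc_pri} and the synchronous-move fix address output leakage \emph{orthogonally} to the simulation argument, is the right way to separate these concerns; the paper somewhat conflates them by baking the comparison-bit generation into the simulator.
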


\begin{proof}
Since PPTS protocol is symmetric, proving that the view of one party can be simulated with its own input and output suffices to prove it for all parties. We now show how to simulate the messages received by an arbitrary party $P_a, a\in[0,m-1]$. While $P_a$ is chosen to color its partition (this is iteratively done in PPTS), two cases occur.

First, if $P_a$ changes the color of its inner vertices (Step 6-10), $P_a$ learns only $\mu(x')< \mu(x)$ or $\mu(x')\geq \mu(x)$ by locally comparing its shares in $\mu(x')$ and $\mu(x)$. Other parties learn nothing (the number of iterations to color $P_a$'s partition is also unknown to other parties). Hence, these can be simulated by simply executing those steps.

Second, if $P_a$ moves one of its border vertices, a synchronous move is triggered. $P_a$ now receives the share of secure conflict computation: $s_{ij}$ (w.o.l.g. we let $P_a$ be the receiver of $s_{ij}$ rather than the creator of $r_{ij}$), and learns the secure comparison result of two solutions ($x$ and $x'$): $\mu(x')<\mu(x)$ or $\mu(x')\geq \mu(x)$. We now examine $P_a$'s view in Step 11-17 (and possibly include Step 21) and build a polynomial simulator that runs for the known synchronous moves.

\begin{figure}[h]
\centering\includegraphics[angle=0, width=1\linewidth]{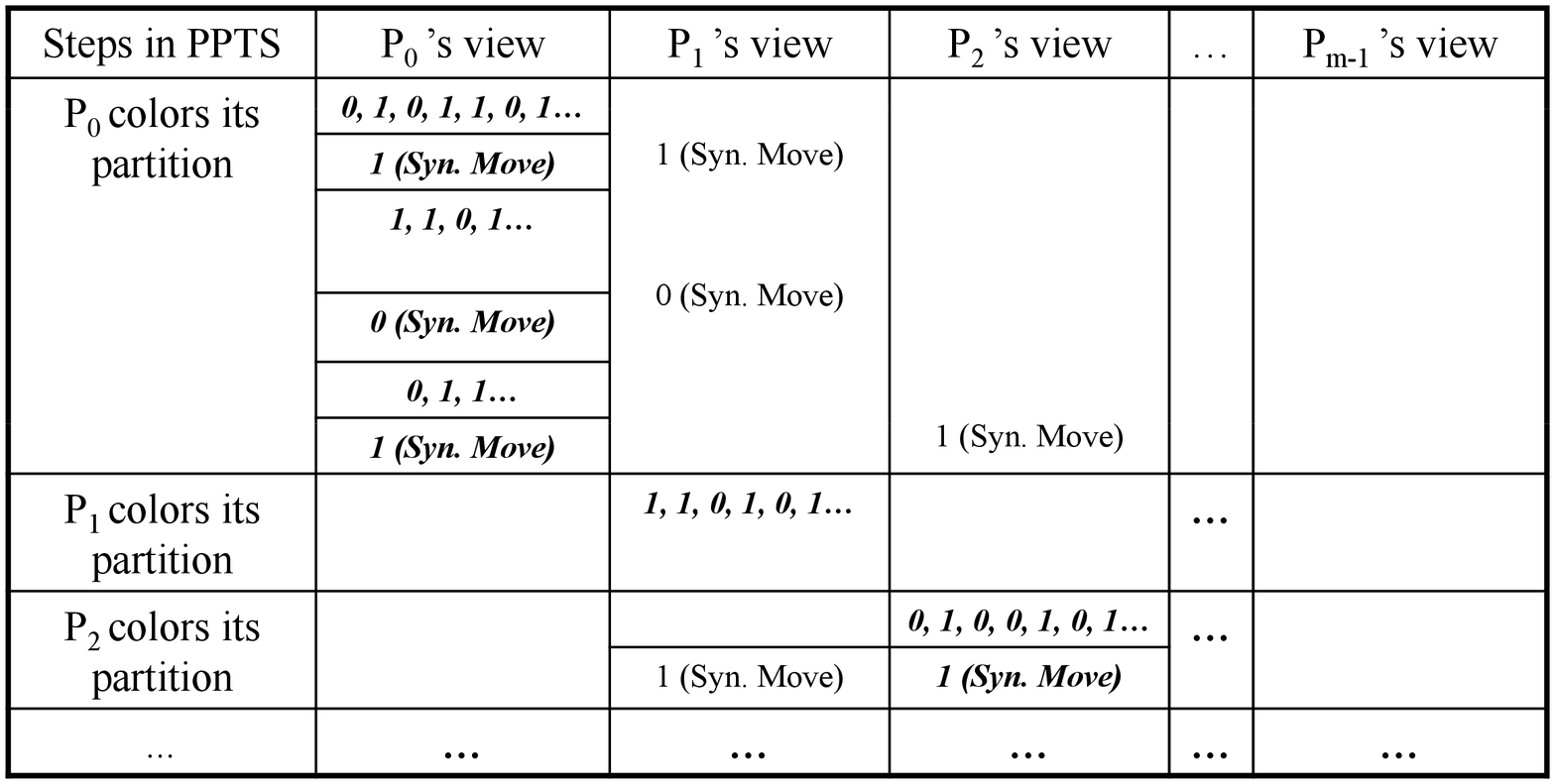}
\caption{$P_0,\dots,P_{m-1}$'s View (the sign of
$\mu(x')-\mu(x)$: 1 is $\mu(x')<\mu(x)$; 0 is $\mu(x')\geq\mu(x)$)}
\label{fig:move}
\end{figure}

\textit{$P_a$'s view}: in order to simulate $P_a$'s view, the polynomial simulator must satisfy the following condition: given $P_a$'s partition (input) and the final solution (output), simulate the views of secure conflict computation and secure comparison in every synchronous move (the shares in secure scalar product computation and a sequence of results of secure comparison  as shown in Figure \ref{fig:move}). The simulator is built as below:

\begin{itemize}
\item The result of secure scalar product computation $s_{ij}=x_i\cdot x_j+r_{ij}$ (w.o.l.g. we let $P_a$ be the receiver of $s_{ij}$), can be simulated by generating a random from the uniform probability
distribution over $\mathcal{F}$ (Note: we assume that $s_{ij}$ are scaled to fixed precision over a closed field, enabling such a selection). Thus, $Prob[T=t]=Prob[s_{ij}=t]=Prob[r_{ij}=t-s_{ij}]=\frac{1}{\mathcal{F}}$.

\item To simulate the sequence of secure comparison results for $P_a$, we can utilize the inverse steps of tabu search on $P_a$'s border vertices (we denote this process as inverse optimization). A key result from inverse optimization is that if the original problem is solvable in polynomial time, the reverse problem is also solvable in polynomial time \cite{AhujaInverseOpt}. Specifically, letting the cost function be maximizing $\mu(x)$ ($x'$ remains the neighborhood of $x$), we start from the final colorable solution (only in $P_a$'s known part of the graph)\footnotemark[4], generate neighborhoods by changing the color of its border vertices (color is randomly chosen), and compare $\mu(x')$ and $\mu(x)$. If $\mu(x')>\mu(x)$, the simulator outputs an ``1'', otherwise it outputs a ``0'' (Since the comparison signs in moving these vertices in the simulator are deterministic and known to $P_a$, the comparison signs of synchronous move in original PPTS protocol can be simulated by this machine). Although the colors and the vertices might be different from the original
one, the sign of comparison is the same. 
\end{itemize}

\footnotetext[4]{Since $P_a$ knows the existence of some adjacent border vertices from other parties, we can let $P_a$ color them at the beginning. Thus, moving $P_a$'s any vertex is deterministic in
the simulator.}

\textit{$P_b$'s view}: First, $P_b$ knows the sign of $\mu(x')-\mu(x)$ in all synchronous move in which it is involved. Similarly, we can use the simulator to generate the signs of comparison as above. Secondly, in all secure scalar product computation, the random number $r_{ij}$ can be generated again using the same uniform distribution in secure conflict computation. Since the outputs of other parties are similar to $P_b$, we can use the same simulator.

In the remaining Steps of PPTS, each party redefines new current solution and manipulates the tabu list in local computation, thus we can simulate them by executing those steps. Therefore, PPTS is secure against semi-honest behavior, assuming that all the parties follow the PPTS protocol.

Thus, Theorem \ref{thm:sec} has been proven. Note that, effectively, a party can learn at most the color of the border vertex of another party with which it shares an edge. This is useless in the intermediate stages (since the color is not fixed). In the final solution, since typically the number of shared border vertices is significantly smaller than the total number of vertices, the leaked information is not very significant.
\end{proof}

There are several remaining issues. First, since our simulator is also a meta-heuristic, the worst case running time is not polynomial. This is a problem-dependent issue: if the problem is always solved by tabu search in polynomial time, then the simulator is guaranteed to be polynomial, and the protocol is secure. Second, some additional minor information is probably leaked. For example, in the output of PPTS, given $k$-colorable solution, the colors of $P_a$'s adjacent vertices (belong to other parties) can be known (to $P_a$) with a different color from corresponding adjacent vertex. This is unavoidable in the DisGC problem (even if we build a trusted third party, it is learned by other parties with the output). Moreover, the number of iteratively calling different parties to color its partition and the fact that $\mu(x)>0$ until the last step are also revealed. These pieces of information are trivial and do not significantly impact the security of the PPTS protocol.

\subsection{Communication and Computational Cost Analysis}

Given a distributed graph with $n$ vertices and $n_e$ external edges, in PPTS protocol, we assume that $m$ parties are repeatedly called to locally color their partition for $c$ times and $\ell$ synchronous moves are executed during coloring of the partition by a party. Based on this, we can discuss the computation and computational cost of the PPTS protocol.

\subsubsection{Communication Cost}
Note that only secure conflict computation and secure comparison require communication of messages among parties. Since conflict computation for each external edge requires two communication messages, the total number of messages generated for secure conflict computation is $2c*(n_e*\ell+1)$, where $O(n_e)=O(n^2)$. Thus, the communication complexity of secure scalar product computation in PPTS is $O(cn^2\ell)$. Moreover, every secure comparison requires $O(m)$ parties to compute the output. Since the number of communication messages in every secure comparison equals to the number of computing parties \cite{Grama03} and PPTS includes $c*(\ell+1)$ secure comparison, the communication complexity of secure comparison in PPTS is $O(cm\ell)$.

\subsubsection{Computation Cost}

Since each party triggers $\ell$ synchronous moves while coloring its partition, $c*\ell+1$ secure conflict computations and $c*(\ell+1)$ secure comparisons are required in PPTS.  If the runtime of a single secure scalar product computation and a single secure comparison is denoted by $t_s$
and $t_c$ respectively, the additional cost can be estimated as $(c*\ell*n_e+1)*t_s+c*(\ell+1)*t_c$. Thus, the computational complexity of PPTS is $O(cn^2\ell)$ for secure scalar product and
$O(c\ell)$ for secure comparison.

\section{Experiments}
\label{sec:exp}

In this section, we present the experimental validation of our algorithm for securely solving distributed graph coloring problems. 

\subsection{Experiments Setup}
For the experiments, we randomly generate graphs with the number of vertices $N$ and the density of the graph $\rho$ (probability of the existence of an edge between any two given vertices). Each generated graph is partitioned among $10$ parties ($P_0,P_1,\dots,P_9$), we let each party hold N/10 vertices. In each test, we generate $10$ graphs with the same $N$ and $\rho$, and report the average of the results.

We compare our privacy preserving tabu search (PPTS) algorithm with Tabucol \cite{Hertz} using the same group of graphs. Both algorithms are terminated if no $k$-colorable solution can be found in $10^5$ iterations. We let each party hold a tabu list with length $N/10$ in PPTS while the overall tabu list in Tabucol is established with length $N/10$. In synchronous move of PPTS algorithm, we assume that $P_b$ (the second party) has 50\% chance of skipping a move. Futhermore, in PPTS algorithm, Paillier cryptosystem \cite{Paillier99} with 512-bit and 1024-bit key is used for homomorphic encryption (securely computing number of conflicts). For FairplayMP \cite{fairplaymp}, sufficiently large modulus length for 10 parties (i.e. 128 bits) is used for secure comparison. All the experiments are performed on an HP machine with Intel Core 2 Duo CPU 3GHz and 6GB RAM.

Our goal is to validate two hypotheses: (1). the optimal results (chromatic number) of solving DisGC problem using PPTS are close to the results of Tabucol (this indicates that our approach is as accurate as any general meta-heuristic for graph coloring, and (2). the computation time of PPTS is reasonable.  Thus, for the first goal, we compute the chromatic number of distributed graphs with varying number of vertices \#V=\{100, 200, 300, 500, 1000\} and for two different densities $\rho$=10\% and 30\% using PPTS and Tabucol (for each pair of \#V and $\rho$, we generate 10 distributed graphs; 512-bit key is used). For the second goal, first, we compute the runtime of PPTS with 512-bit key and with 1024-bit key and compare it with Tabucol for varying \#V=\{100, 200, 300, 500, 1000\} and fixed $\rho$=20\% (Fig. \ref{fig:v}); second, we perform the comparison between the execution time of PPTS and Tabucol for varying $\rho=\{2\%,5\%,10\%,20\%,30\%\}$ and fixed \#V=500 (Fig. \ref{fig:d}). Table \ref{table:set} shows the specified parameters for the experiments.

\begin{table}[h!]
	\centering \caption{Experimental Parameters}
	\begin{tabular}{|l|c|c|}
		\hline
		Parameter & Selected Values & If Fixing\\
		\hline
		$\#$ of vertices &  100, 200, 300, 500, 1000 & 500 \\
		\hline
		Dens. of Edges & 2\%, 5\%, 10\%, 20\%, 30\% & 20\% \\

		\hline
		Key Length & 512-bit, 1024-bit & 512-bit\\
		\hline

	\end{tabular}
	\label{table:set}
\end{table}

\subsection{Experimental Results}

\begin{figure*}[!t]
	\centering \subfigure[\#V=100, $\rho$=10\%,
	k=10]{\centering\includegraphics[angle=0,
		width=0.32\linewidth]{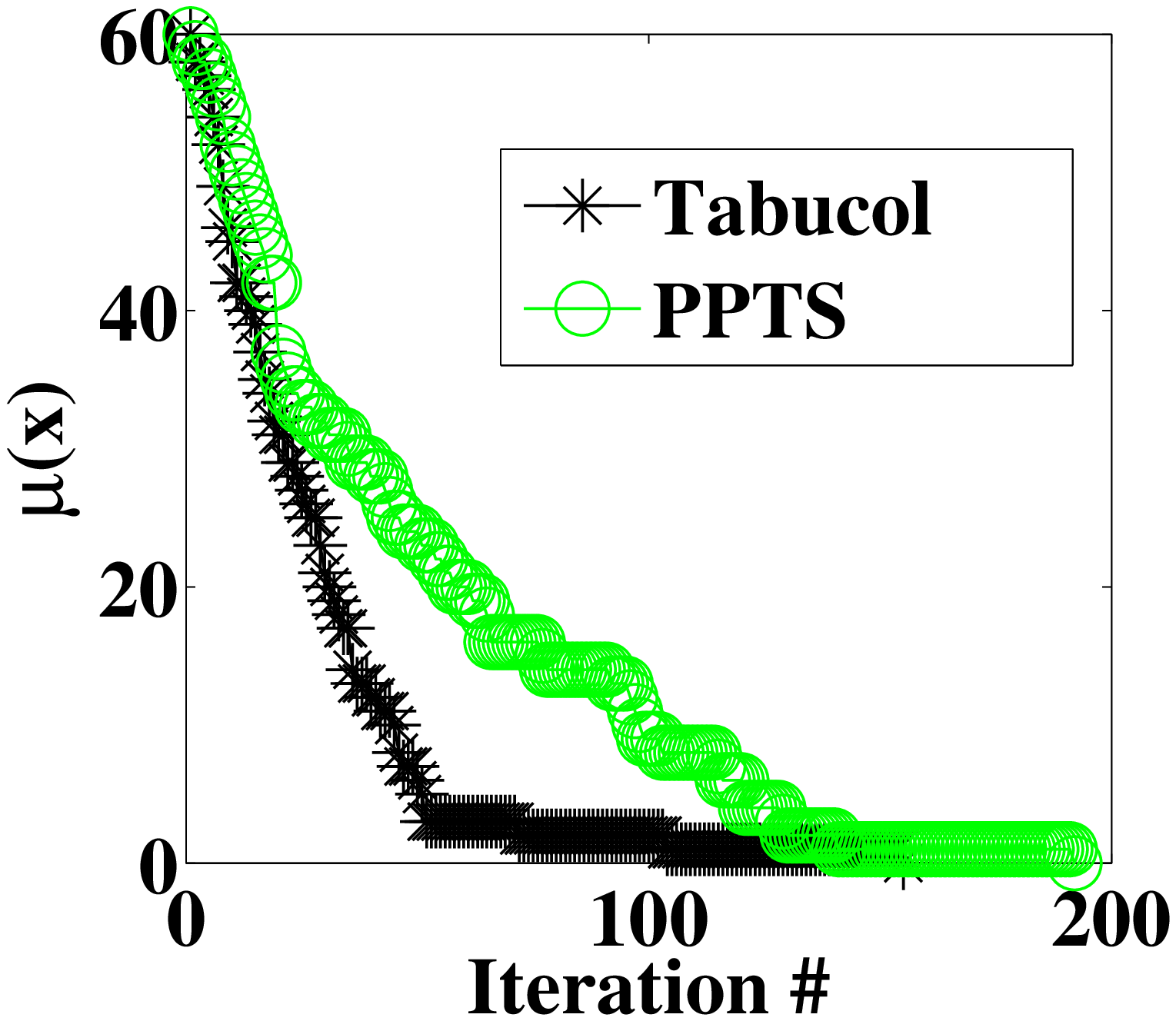} \label{fig:sparseiter}
	}\subfigure[\#V=300, $\rho$=20\%,
	k=30]{\centering\includegraphics[angle=0,
		width=0.32\linewidth]{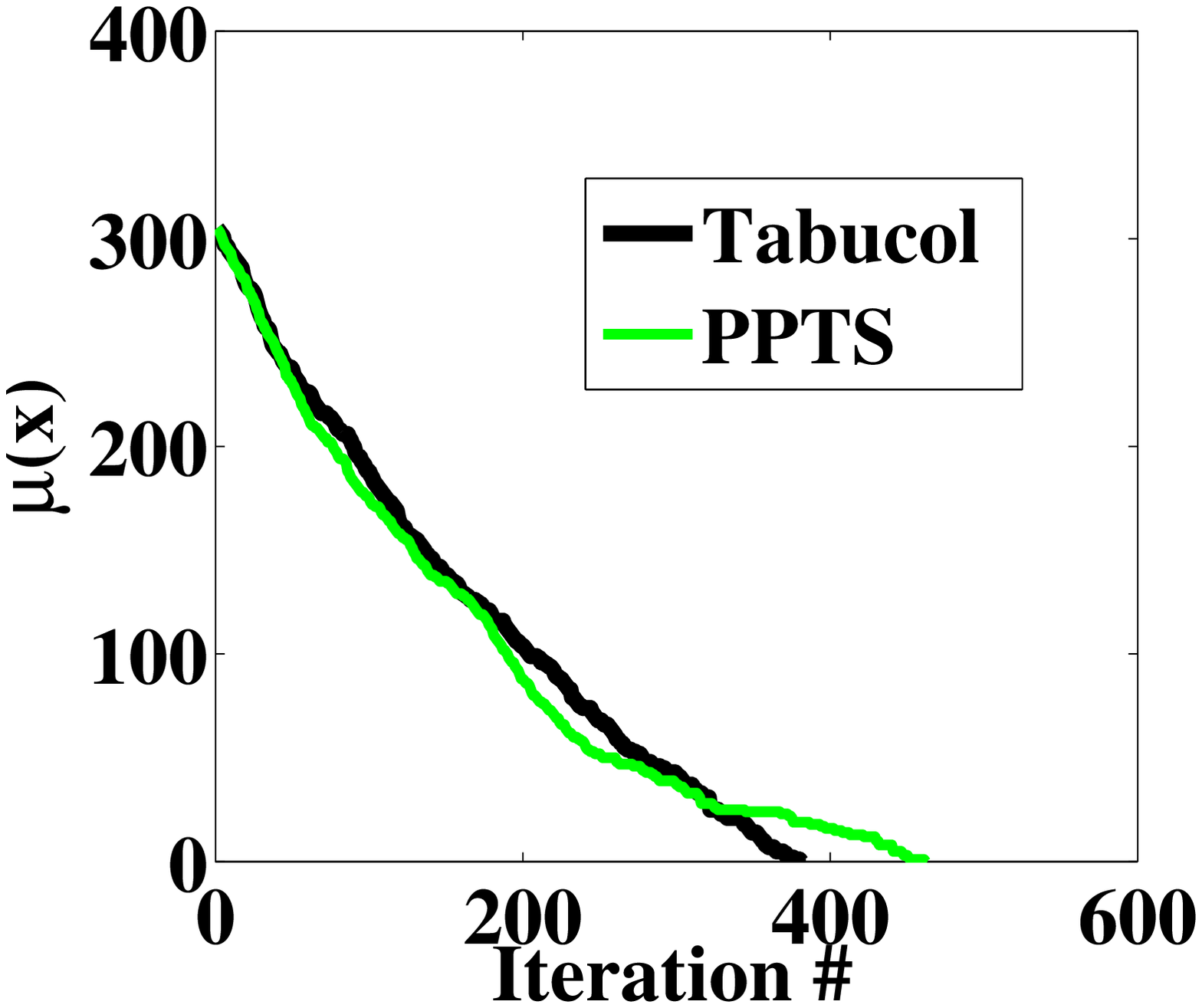} \label{fig:mediter} }
	\subfigure[\#V=500, $\rho$=30\%,
	k=40]{\centering\includegraphics[angle=0,
		width=0.32\linewidth]{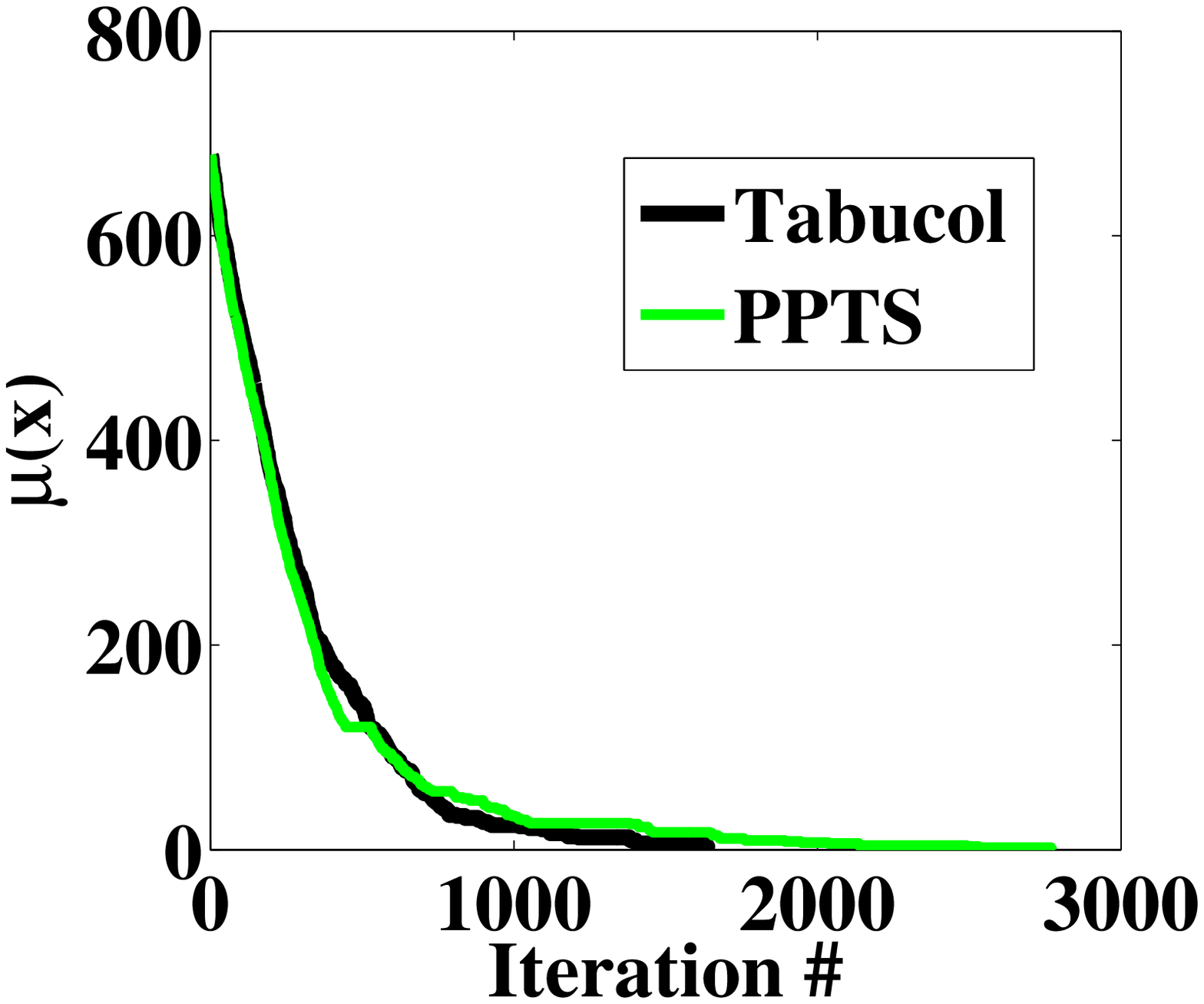} \label{fig:denseiter}
	}\vspace{-0.1in} \caption[Example of Distributed Combinatorial
	Optimization] {Total Number of Conflicts ($\mu(x)$) in each
		Iteration}\label{fig:iter}
\end{figure*}

\begin{figure*}[!t]
	\centering \subfigure[Chromatic \# (min
	k)]{\centering\includegraphics[angle=0, width=0.32\linewidth]{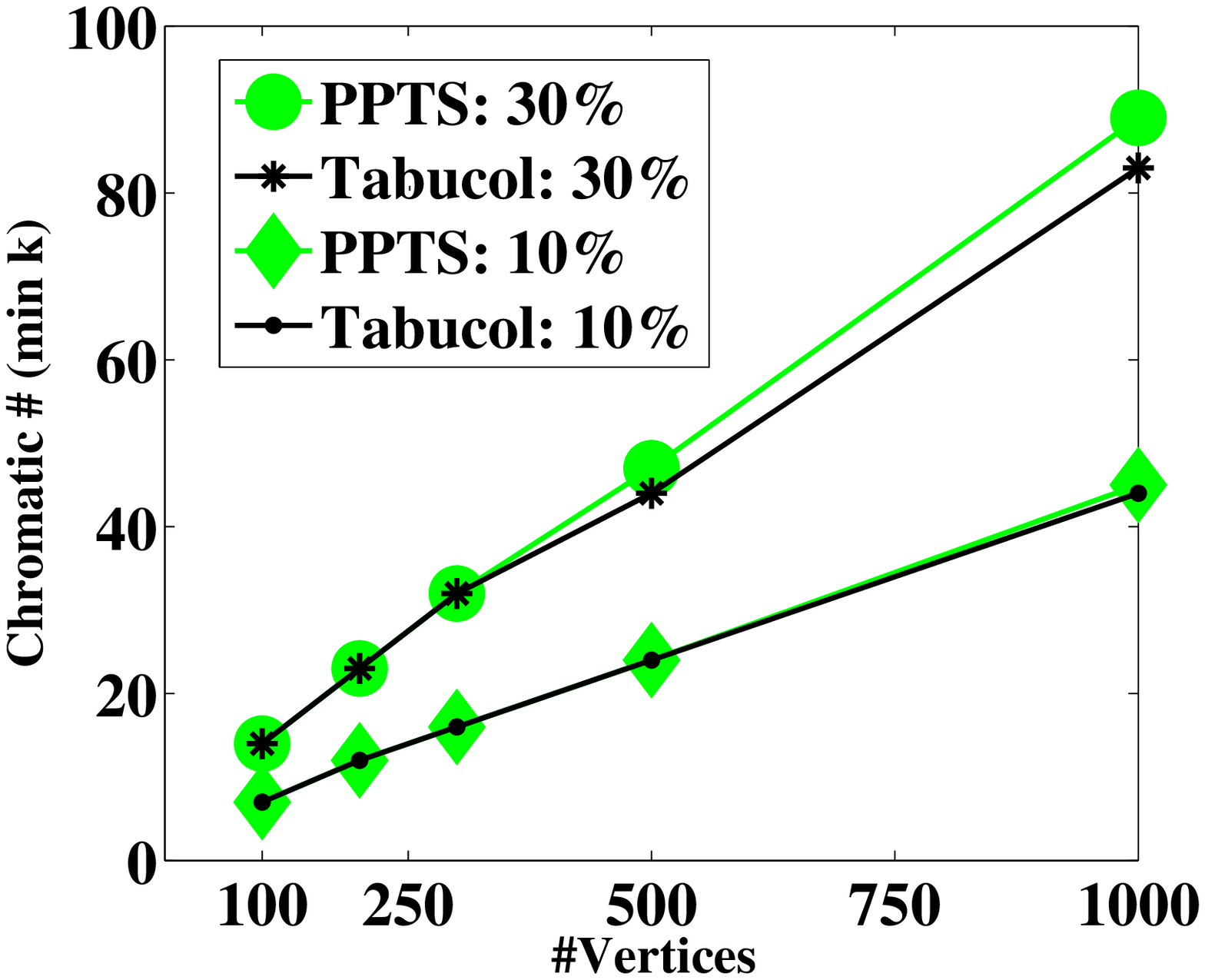}
		\label{fig:k} } \subfigure[Runtime on Graph
	Size]{\centering\includegraphics[angle=0,
		width=0.32\linewidth]{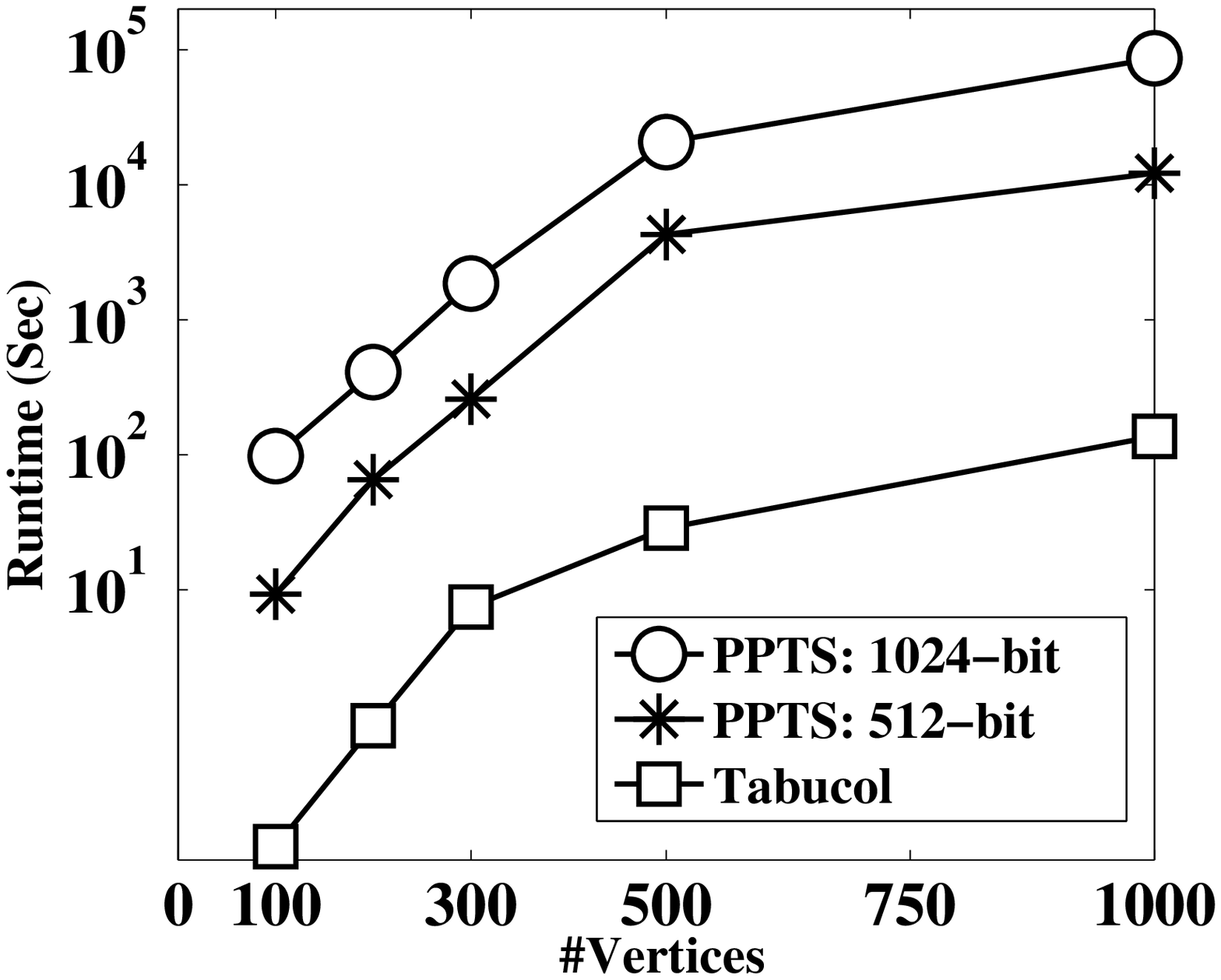} \label{fig:v} } \subfigure[Runtime
	on Density]{\centering\includegraphics[angle=0,
		width=0.32\linewidth]{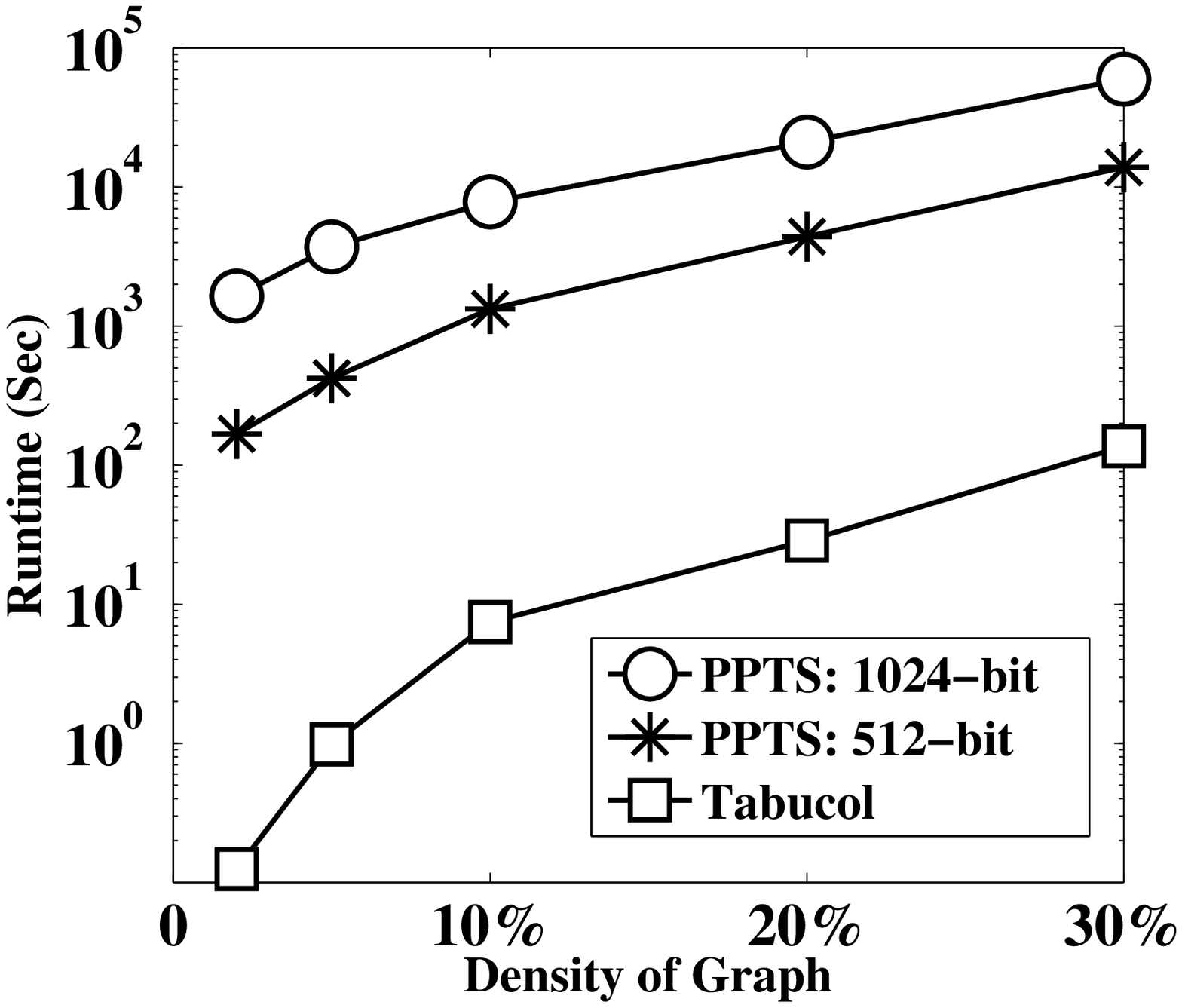} \label{fig:d} } 
	\caption[Example
	of Distributed Combinatorial Optimization] {Optimal Results and
		Computational Cost Testing}  \label{fig:exp}
\end{figure*}

\subsubsection{PPTS Solution Quality}
Fig. \ref{fig:iter} depicts the number of total conflicts $\mu(x)$ and the number of iterations required by PPTS and Tabucol for three different graphs (k is set slightly greater than the optimal chromatic number). While solving
DisGC on these three graphs, the  improvements (reduced $\mu(x)$) in each iteration of both approaches are similar. PPTS generally requires more iterations than Tabucol. This is unavoidable because due to the privacy concern, the algorithm does not know which external edges cause conflict, and therefore cannot improve only upon the conflicting edges. Thus, the search performance is slightly lower, but still acceptable.

Fig. \ref{fig:k} shows the chromatic number (min k) for distributed graphs with different size and density (for each pair of \#V and $\rho$, we randomly generate 10 graphs, solve them with two algorithms and different k, and choose the min k). In general, the optimum performance of the two algorithms is similar.

\subsubsection{PPTS Efficiency}

As the communication cost is small compared to the computation cost (each single computation consumes significantly more time), we evaluate the efficiency of PPTS with respect to computation time. As shown in Fig. \ref{fig:v} and \ref{fig:d}, the runtime of PPTS with 512-bit and 1024-bit is not excessive (approximately 30 hours are required to securely solve a 10-party distributed graph coloring problem with 1000 vertices and density 20\%). This can also be improved by using cryptographic accelerators or parallelizing the computation.

\section{Related Work}
\label{sec:related}
There is prior work both in distributed optimization as well as in secure optimization. Silaghi and Rajeshirke \cite{Silaghi04} show that a secure combinatorial problem solver must necessarily pick the result randomly among optimal solutions to be really secure. Silaghi and Mitra \cite{SilaghiMitra04} propose arithmetic circuits for solving constraint optimization problems that are exponential in the number of variables for any constraint graph. A significantly more efficient optimization protocol specialized on generalized Vickrey auctions and based on dynamic programming is proposed by Suzuki and Yokoo \cite{Suzuki04Vickrey}. Yokoo et al. \cite{Yokoo02DCSP} also propose a scheme using public key encryption for secure distributed constraint satisfaction. Atallah et al.\cite{sec_supplychain} propose protocols for secure supply chain management. However, none of these can be used to solve the general class of distributed scheduling problems, which is what we focus on.

Recently, privacy preserving linear programming problem
\cite{Vaidya09SAC,Vaidya09aina,HongDBsec,HongJCS12,HongOpt13,HongThesis,HongJIS14} and nonlinear programming problem \cite{HongTDSC15} have been well studied to securely solve the distributed LP problems among multiple parties. However, little work has been made on securing distributed NP-hard problems due to the challenging issue of computational complexity. Sakuma et al. \cite{SakumaK07} proposed a genetic algorithm for securely solving two-party distributed traveling salesman problem (NP-hard). They consider the case that one party holds the cost vector while the other party holds the tour vector (this is a special case of multi-party distributed combinatorial optimization). Following their work, the distributed TSP problem that is completely partitioned among multiple parties is securely solved in \cite{HongDBSEC14}. Similarly, we securely solve another classic combinatorial optimization model (graph coloring) that is
completely partitioned among multiple parties. 

Finally, secure computation in real world applications can also be leveraged for the development of privacy preserving optimization. For instance, in the smart grid infrastructure \cite{HongIJER15,icassp18,7929354,SGbook,sgc17} and search engine query applications \cite{HongCikm09,HongEDBT12,HongTDSC15,HongWi11}, the secure communication protocols can be proposed based on the extension of privacy preserving linear or nonlinear programming models (by incorporating more real world constraints).   

\section{Conclusion and Future Work}
\label{sec:con}
In this paper, we have presented a privacy preserving algorithm to solve the distributed graph coloring problem with limited information disclosure. Our solution creates a secure communication protocol based on the existing Tabucol algorithm by making use of several cryptographic building blocks in novel ways. It also provides a roadmap for generalizing the solution to other distributed combinatorial
optimization problems. Given the iterative nature of the solution, we also identify the possible inference attacks on the output in each iteration, and propose a synchronous move mechanism to resolve such problems. In the meanwhile, theoretical study is also given to show the guaranteed protocol security and mitigated inference risks in the proposed approach. We analyze the computation and communication complexity of our modified tabu search, and experimentally show that the PPTS protocol provides nearly identical optimum performance as Tabucol without sacrificing efficiency for privacy.

There are several interesting future directions. While we examine a specific data partitioning for the DisGC problem, other data partitions are also possible. For example, perhaps a party may hold the information about graph edges rather than graph vertices (this could happen, for example in scheduling). We plan to explore this problem and develop solutions for it. Secondly, while graph coloring has many applications, there are several other combinatorial optimization problems, such as mixed integer programming \cite{WangDisasterR18}, scheduling \cite{LiuWHY17,en10010105}, knapsack and vehicle routing \cite{opt} and even some data mining problems \cite{LuHYDB15,HongTDSC12,HongICDMW12,HongVLL16,LuVAH09} which can be naturally distributed. We plan to explore how our privacy preserving tabu search can be applied in these domains. Finally, while tabu-search is generally quite useful, it does not work well for all applications. There are several other algorithms such as simulated annealing, or genetic algorithms that also enjoy wide usage. We will look at how such algorithms could be made privacy preserving in the future.

\section{Acknowledgments}

This work is supported in part by the National Science Foundation under Grant No. CNS-0746943 and the Trustees Research Fellowship Program at Rutgers, the State University of New Jersey.

\appendix

\section{Proof of Lemma 1} \label{sec:lem1}

\begin{proof}
Assuming that $v_i$ is an endpoint of $n$ external edges $e_{ij},j\in[1,n]$, we represent that the colors of $e_{ij}$'s endpoints in $x$ (and $x'$) are conflicted or not as $\mu_{ij}(x)$ = 0 or 1 (and $\mu_{ij}(x')$ = 0 or 1).

If $\mu(x)$ and $\mu(x')$ are revealed in each iteration, while $P_a$ changes the color of $v_i$ (from $x_i$ to $x_i'$), $\sum_{\forall j\in[1,n]}[\mu_{ij}(x')-\mu_{ij}(x)]$ (denoted as $\delta$) is known to $P_a$. Specifically, $\delta\in[-n,n]$ has following cases:

\begin{itemize}
\item \textbf{Case 1: } if $\delta=n$, $P_a$ learns: no edge from $\forall j\in[1,n], e_{ij}$ has a conflict in $x$ , but all the $n$ edges have conflicts in $x'$ (thus, the colors of all $n$ border vertices from other parties are
identical to $x_i'$);

\item \textbf{Case 2: } if $\delta=-n$, $P_a$ learns: all the edges have conflicts in $x$ but all the conflicts are eliminated in $x'$ (thus, the colors of all $n$ border vertices from other parties are identical to $x_i$);

\item \textbf{Case 3: } more generally, if $-n<\delta<n$, every edge has a probability to have conflict. We denote the probability that a specific edge $e_{ij}$ has no conflict in $x$ but has conflict in $x'$ (and has
conflict in $x$ but has no conflict in $x'$) as $Prob[\mu_{ij}(x\rightarrow x'):0\rightarrow 1]$ (and
$Prob[\mu_{ij}(x\rightarrow x'):1\rightarrow 0]$) where $\delta\in(-n,n)$ (since inferring that $e_{ij}$ has conflict in either $x$ or $x'$ should result in learning $v_j$'s color by $P_a$).
\end{itemize}

$\diamond.$ We first consider that $\delta\geq 0$ (if $\delta<0$, we can obtain a similar result). We denote the number of edges (out of $n$) where $\mu_{ij}(x)\rightarrow\mu_{ij}(x'):0\rightarrow 1$ as $\delta_1$; denote the number of edges (out of $n$) where $\mu_{ij}(x)\rightarrow\mu_{ij}(x'):1\rightarrow 0$ as $\delta_0$;
denote the number of edges (out of $n$) where $\mu_{ij}(x)\rightarrow\mu_{ij}(x'):$ $0\rightarrow 0$ as $\delta_c$ ($1\rightarrow 1$ is impossible since $x_i$ is changed). Thus, we have $\delta_1+\delta_0+\delta_c=n$. Since $\delta_1-\delta_0=\delta$ (overall difference), we have
$\delta_1\leq\lfloor\frac{n+\delta}{2}\rfloor$ and $\delta_0\leq\lfloor\frac{n-\delta}{2}\rfloor$ (if $n-\delta$ is an
odd number, due to $\delta_c\geq0$, we have $\delta_1<\frac{n+\delta}{2}$). Furthermore, it is clear that
$\delta_1\geq \delta$. We now discuss all possible values of $(\delta_1,\delta_0)$.

Specifically, given $\delta$ and $n$, $(\delta_1,\delta_0)$ has following possible pairs of values: $(\delta,0)$,
$(\delta+1,1),\dots$, $(\lfloor\frac{n+\delta}{2}\rfloor$, $\lfloor\frac{n-\delta}{2}\rfloor)$ (equally possible for all
pairs). Hence, the probability of valuing each pair is $\frac{1}{\lfloor\frac{n-\delta}{2}\rfloor+1}$. As a consequence, for a given edge, $Prob[\mu_{ij}(x\rightarrow x'):0\rightarrow 1]=\frac{1}{\lfloor\frac{n-\delta}{2}\rfloor+1}\times[\frac{\delta}{n}+\frac{\delta+1}{n}+\dots+\frac{\lfloor\frac{n+\delta}{2}\rfloor}{n}]$=$\frac{\delta+\lfloor\frac{n+\delta}{2}\rfloor}{2n}$ and $Prob[\mu_{ij}(x\rightarrow x'):1\rightarrow 0]=\frac{1}{\lfloor\frac{n-\delta}{2}\rfloor+1}\times[\frac{0}{n}+\frac{1}{n}+\dots+\frac{\lfloor\frac{n-\delta}{2}\rfloor}{n}]$=$\frac{\lfloor\frac{n-\delta}{2}\rfloor}{2n}$. We thus have:

\begin{enumerate}
\item since the probability $Prob[\mu_{ij}(x\rightarrow x'):0\rightarrow 1]=Prob[x_j=x_i']$ with a known fixed $\delta$, party $P_a$ can guess that $x_j=x_i'$ and the endpoint colors of $e_{ij}$ are conflicted in solution $x'$ with a probability of $\frac{\delta+\lfloor\frac{n+\delta}{2}\rfloor}{2n}$.

\item since the probability $Prob[\mu_{ij}(x\rightarrow x'):1\rightarrow 0]=Prob[x_j=x_i]$ with a known fixed $\delta$, party $P_a$ can guess that $x_j=x_i$ and the endpoint colors of $e_{ij}$ are conflicted in solution $x$ with a probability of $\frac{\lfloor\frac{n-\delta}{2}\rfloor}{2n}$.
\end{enumerate}

$\diamond.$ Similarly, if $\delta<0$, we can deduce that $Prob[x_j=x_i]=Prob[\mu_{ij}(x\rightarrow x'):1\rightarrow 0]= \frac{-\delta+\lfloor\frac{n-\delta}{2}\rfloor}{2n}$ and $Prob[x_j=x_i']=Prob[\mu_{ij}(x\rightarrow x'):0\rightarrow 1]= \frac{\lfloor\frac{n+\delta}{2}\rfloor}{2n}$.

Finally, we verify Case 1 and 2 in the general probability computation: if $\delta=n$, we have $Prob[\mu_{ij}(x\rightarrow x'):0\rightarrow 1]=1$ and $Prob[\mu_{ij}(x\rightarrow x'):1\rightarrow 0]=0$, hence the color of $v_j$ is determined as $x_i'$; if $\delta=-n$, we have $Prob[\mu_{ij}(x\rightarrow
x'):0\rightarrow 1]=0$ and $Prob[\mu_{ij}(x\rightarrow x'):1\rightarrow 0]=1$, hence the color of $v_j$ is determined as $x_i$.

\end{proof}

\section{Proof of Lemma 2}
\label{sec:lem2}
\begin{proof}
We let $\mu_a(x)$ and $\mu_a(x')$ be the number of $P_a$'s conflicted internal edges in $x$ and $x'$ respectively, and denote $\delta_a$ as $\mu_a(x')-\mu_a(x)$. If only $\mu(x')<\mu(x')$ or $\mu(x')\geq \mu(x)$ are revealed in each iteration, while $P_a$ changes the color of $v_i$ (from $x_i$ to $x_i'$), $P_a$ knows only
$\delta+\delta_a<0$ or $\delta+\delta_a\geq0$ ($\delta=\sum_{\forall j\in[1,n]}(\mu_{ij}(x')-\mu_{ij}(x))$). We now try to compute the probability $Prob[x_j=x']$ and $Prob[x_j=x]$. Essentially, if $\delta$ is known and fixed, $Prob[x_j=x']$ and $Prob[x_j=x]$ can be computed in terms of Lemma 1. Hence, we discuss the possible privacy
breaches for all cases. Since $P_a$ knows $\delta_a$ and the relationship between $\delta+\delta_a$ and $0$, we look at two cases:

\begin{itemize}

\item \textbf{Case 1: } if $\delta+\delta_a<0$, we have $\delta<-\delta_a$. Since $\delta_a$ is known to $P_a$ and it can be any integer, we discuss all possible $\delta_a$ as following.

\begin{enumerate}

\item if $\delta_a\geq n$, it is impossible to obtain $\delta+\delta_a<0$ because $\delta\in[-n,n]$;

\item if $\delta_a<-n$, $\delta$ can be any integer in $[-n,n]$ to generate output $\mu(x')< \mu(x)$, hence, probabilities $Prob[x_j=x_i']$ and $Prob[x_j=x_i]$ is minimized (no additional information is revealed);

\comment{ $\forall\delta\in[-n,n]$, $\delta+\delta_a<0$ holds. Therefore, $Prob[x_j=x_i']=\frac{1}{2n+1}\times \sum_{\forall \delta\in [0,n]}Prob[x_j=x_i'|\delta]+\frac{1}{2n+1}\times\sum_{\forall
\delta\in [-n,-1]} Prob[x_j=x_i'|\delta]= \frac{1}{4}$ (Sketch Derivation). Similarly, $Prob[x_j=x_i]=\frac{1}{4}<1$;}

\item if $-n\leq\delta_a\leq n-1$, $P_a$ can learn that $-n\leq\delta<-\delta_a$ (regularly, $-n\leq\delta\leq n$ is known to $P_a$). With the greater $-\delta_a$, $P_a$ learns less additional information. The worst privacy breach case should be $\delta_a=n-1$ such that $\delta=-n$ is inferred. At this time, $P_a$ can infer $x_j$ with probability $Prob[x_j=x_i]=1$.

\end{enumerate}

\item \textbf{Case 2: } if $\delta+\delta_a\geq0$, we have $\delta\geq-\delta_a$. Similarly, we discuss following cases:

\begin{enumerate}

\item if $\delta_a< -n$, it is impossible to obtain $\delta+\delta_a\geq0$ because $\delta\in[-n,n]$;

\item if $\delta_a\geq n$, $\delta$ can be any integer in $[-n,n]$ to generate output $\mu(x')\geq \mu(x)$,
hence, probabilities $Prob[x_j=x_i']$ and $Prob[x_j=x_i]$ is minimized (no additional information is revealed);

\item if $-n\leq\delta_a<n$, $P_a$ can learn that $-\delta_a\leq\delta\leq n$ (regularly, $-n\leq\delta\leq n$ is
known to $P_a$). With the smaller $-\delta_a$, $P_a$ learns less additional information. The worst privacy breach case should be $\delta_a=-n$ such that $\delta=n$ is inferred. At this time, $P_a$ can infer $x_j$ with probability $Prob[x_j=x_i']=1$.

\end{enumerate}

\end{itemize}

In sum, with greater $-\delta_a$ in Case 1 and smaller $-\delta_a$ in Case 2, it should be more difficult to guess $\delta$ and infer $x_j=x_i$ or $x_j=x_i'$ in most cases, compared with known $\delta$ (to $P_a$). Hence, the risk of inference attack has been reduced in general. However, we have two worst cases that still breaches
privacy: 1. As $\delta_a=n-1$ and $\mu(x')<\mu(x)$ are the output to $P_a$ in an iteration, thus $P_a$ learns $\forall j\in[1,n],x_j=x_i$; 2. As $\delta_a=-n$ and $\mu(x')\geq\mu(x)$ are the output to $P_a$ in an iteration, thus $P_a$ learns $\forall j\in[1,n], x_j=x_i'$.

\end{proof}

\section{Proof of Lemma 3}
\label{sec:lem3}

\begin{proof}
\begin{table}[h!]
\centering \caption{Notations in Synchronous Move}
\begin{tabular}{|c|c|}
      \hline
      $v_i$ and $v_j$ & $P_a$ and $P_b$'s vertex for move\\
      \hline
      $\mu_a(x)$ and $\mu_a(x')$ & $P_a$'s number of conflicted internal edges in $x$ and $x'$ respectively\\
      \hline
      $\mu_b(x)$ and $\mu_b(x')$ & $P_b$'s number of conflicted internal edges in $x$ and $x'$ respectively\\
      \hline
      $\delta_a$ and $\delta_b$ & $\delta_a=\mu_a(x')-\mu_a(x)$, $\delta_b=\mu_b(x')-\mu_b(x)$\\
      \hline
      $\forall s, e_{is}$; $\forall t, e_{jt}$ &
      $n$ external edges with endpoint $v_i$; $n'$ external edges with endpoint $v_j$\\
      \hline
      $\forall s, \mu_{is}(x)$ and $\mu_{is}(x')$ & $e_{is}$ is conflicted or not in $x$ and
      $x'$\\
      \hline
      $\forall t, \mu_{jt}(x)$ and $\mu_{jt}(x')$& $e_{jt}$ is conflicted or not in $x$ and
      $x'$\\
      \hline
      $\delta$ & $\delta=\sum_{\forall s}[\mu_{is}(x')-\mu_{is}(x)]$\\
      \hline
      $\delta'$& $\delta'=\sum_{\forall t}[\mu_{jt}(x')-\mu_{jt}(x)]$\\
      \hline
\end{tabular}

\label{table:notation}
\end{table}
\vspace{-0.1in}

We first present some notations in synchronous move in Table \ref{table:notation}. With a synchronous move, $\mu(x')<\mu(x)$ or $\mu(x')\geq \mu(x)$ is revealed. $\delta_a$ and $\delta_b$ are known to $P_i$ and $P_j$ respectively. We now discuss the potential inferences on the above revealed information.

First, if $\mu(x')<\mu(x)$, we thus have $\delta_a+\delta+\delta_b+\delta'<0$. $P_a$ knows $\delta_a$ and
$\delta\in[-n,n]$. Since $\delta_b$ and $\delta'$ can be any integer, no additional information other than
$\delta<-\delta_a-\delta_b-\delta'$ can be inferred (as soon as $\delta_a$ is very large, $\delta$ can still be any integer in $[-n,n]$ because $\delta_b+\delta'$ may be negative and sufficiently small to anonymize $\delta$). Similarly, if $v_j$ is a border vertex of $P_b$, $P_b$ cannot learn any additional information other than
$\delta_a+\delta+\delta_b+\delta'<0$ and $\mu(x')<\mu(x)$.

Second, if $\mu(x')\geq\mu(x)$, we thus have $\delta_a+\delta+\delta_b+\delta'\geq0$. For the same reason,
$\delta$ is an unknown integer between $[-n,n]$ to $P_i$, and $\delta'$ is an unknown integer between $[-n',n']$ to $P_j$. No additional information can be inferred.

In a special case, $P_a$'s $v_i$ is adjacent to a vertex of $P_b$ (this is unknown to $P_b$). $P_b$ knows that $P_a$ is moving a border vertex and the result of securely comparing $\mu(x)$ and $\mu(x')$. However, $P_b$ cannot infer that $P_a$ is moving a border vertex that is adjacent to $P_b$ without colluding with other
parties (since $P_a$ may have many border vertices, adjacent to many parties). Hence, synchronous move guarantees security for this special case.

Thus, the inference attacks stated in Lemma 1 and 2 can be eliminated.
\end{proof}

\end{document}